\documentclass[numbers,11pt]{elsarticle}
\usepackage[a4paper]{geometry}
\usepackage{amsmath}
\usepackage{amstext}
\usepackage{amsfonts}
\usepackage{amssymb}
\usepackage{amsthm}
\usepackage{epsfig}
\usepackage{bbm}
\usepackage[english]{babel}
\usepackage{natbib}
\usepackage{color}
\usepackage[T1]{fontenc}
\usepackage{mathrsfs, mathtools}
\mathtoolsset{showonlyrefs}

 \theoremstyle{definition}

 \theoremstyle{plain}
 \newtheorem{thm}{Theorem}[section]
 \newtheorem{prop}{Proposition}[section]
 \newtheorem{cor}{Corollary}[section]
 \newtheorem{lem}{Lemma}[section]
\newtheorem{Algorithm}{Algorithm}[section] 

 \theoremstyle{remark}

\setcitestyle{authoryear,round}

\usepackage{float}
\usepackage{graphicx}  %%% the recommended graphics package
\usepackage[all]{xy}
\usepackage{amscd}
\usepackage{verbatim}
\usepackage{enumerate}

\usepackage{times}
\usepackage{amstext}
\usepackage{amsfonts}
\usepackage{amssymb}
 
\usepackage{subfig}
\usepackage{nextpage}
\usepackage{appendix}

\begin{document}
\begin{frontmatter}
\title{Affine Pricing and Hedging of Collateralized Debt Obligations}
\author[label1]{Zehra Eksi}
\ead{zehra.eksi@wu.ac.at}
\author[label2]{Damir Filipovi\'{c} \corref{cor1}}
\ead{damir.filipovic@epfl.ch}
\cortext[cor1]{Corresponding author}
\address[label1]{Institute for Statistics and Mathematics, WU-University of Economics and Business,
Welthandelsplatz 1, 1020 Vienna, Austria}
\address[label2]{Swiss Finance Institute and Ecole Polytechnique F\'{e}d\'{e}rale de Lausanne, Switzerland}

\begin{abstract}
This study deals with the pricing and hedging of single-tranche collateralized debt obligations (STCDOs). We specify an affine two-factor model in which a catastrophic risk component is incorporated. Apart from being analytically tractable, this model has the feature that it  captures the dynamics of super-senior tranches, thanks to the catastrophic component. We estimate the factor model based on the iTraxx Europe data with six tranches and four different maturities, using a quasi-maximum likelihood (QML) approach in conjunction with the Kalman filter. We derive the model-based variance-minimizing strategy for the hedging of STCDOs with a dynamically rebalanced portfolio on the underlying swap index. We analyze the actual performance of the variance-minimizing hedge on the iTraxx Europe data. In order to assess the hedging performance further, we run a simulation analysis where normal and extreme loss scenarios are generated via the method of importance sampling. Both in-sample hedging and simulation analysis suggest that the variance-minimizing strategy is most effective for mezzanine tranches in terms of yielding less riskier hedging portfolios and it fails to provide adequate hedge performance regarding equity tranches.
\end{abstract}

\begin{keyword} single-tranche CDO, affine term-structure of credit spreads, catastrophic risk, variance minimizing hedge 
\end{keyword}
\end{frontmatter}
\section{Introduction}
\label{introduction}
%{\color{red} TO be added: hedging basis risk (\citet{davis2006optimal}, \citet{cousin2012delta}), intro can be shortened later}
 A collateralized debt obligation (CDO) is a structured product that is backed by a portfolio of credit risky assets. A synthetic CDO is a special type of CDO in which the underlying credit risky portfolio consists of single-name CDSs. Single tranche CDOs (STCDO), on the other hand, make it possible to take an exposure on a specific segment of the underlying portfolio. A CDO long position holder is exposed to two types of risk. The \emph{default risk} arises from the possibility of a default of an obliger and the \emph{market} or \emph{spread} risk is associated with the changes in the credit qualities and the interest rates. Thus, a sound model for the pricing and hedging of CDOs is expected to take both default and credit spread dynamics into account.

 Overall, one can categorize the portfolio credit derivative models according to two approaches. Under the \emph{bottom-up approach} the fundamental objects to be modeled are the loss processes of the portfolio constituents whose sum gives the total portfolio loss. In contrast, the \emph{top-down approach} aims to model the evolution of the aggregate portfolio loss process directly. For a systematic comparison of the top-down and bottom-up approaches we refer to  \cite[Section 2]{bcj} and \citet{gies}. One may also classify the portfolio credit models as \emph{static} or \emph{dynamic}. In static models the particular interest is on the default time distributions of constituents at a given point in time. This results in a model which does not allow for the consistent pricing across different maturities. On the other hand, the dynamic models specify the evolution of default time distribution or the total loss process depending on the top-down or bottom up framework that is followed. Prior to the financial crisis of 2008, copula-based models (see, e.g., \citep{Li}), which are static and focus on the default risk only, were the market standard due to the ease of implementation. The financial crisis 2008 showed that these models were inadequate. It highlighted the necessity for dynamic pricing and hedging models.

In this paper, we propose an affine two-factor model within the framework of \citet{FOS} to dynamically price and hedge STCDOs.  The proposed model is parsimonious and capable of capturing the dynamics of senior tranches, thanks to the \emph{catastrophic risk component} incorporated into the model. Moreover, the resulting loss given default distribution is a weighted mixture with stochastic weights: the first affine factor tunes the truncated exponential distribution, while the second factor drives the catastrophic component. 

We focus on the dynamic hedging of STCDOs with the underlying index swap by computing the variance-minimizing hedging strategy corresponding to our model specification. Then, we test the performance of the variance-minimizing hedge on the iTraxx Europe data. This analysis is complementary to \citet{FS} in the sense that it carries the theoretical framework of that study into implementation. The novelty of this analysis is also due to the data set we use. Although this data set does not include any default event, it covers a  sufficiently long time horizon which witnessed extreme market conditions such as the credit crisis in 2008.  
In order to investigate further on the performance of the hedging strategy, we run a simulation analysis. Specifically, along with the normal scenarios we generate extreme scenarios by employing the importance sampling technique. Both in-sample hedging and simulation analysis suggest that the variance-minimizing strategy is most effective for mezzanine tranches in terms of yielding less riskier hedging portfolios and it fails to provide adequate hedge performance regarding  equity tranches, which is in line with the results by \citet{ascheberg2013hedging}.

We estimate the factor model based on the iTraxx Europe data with six tranches and four different maturities, using a quasi-maximum likelihood (QML) approach in conjunction with the Kalman filter. The QML approach necessitates the knowledge of the first two conditional moments of the factor process. Accordingly, we use the polynomial preserving property of affine processes and compute the conditional mean and variance of the factor process explicitly. Our findings suggest that apart from being analytically tractable, the two-factor affine model with a catastrophic component is capable of describing the dynamics of the tranche data simultaneously. However, a two-factor affine model with the restriction of a zero catastrophic component is not able to capture the dynamics of senior tranches, in particular during the crisis period (for a similar result in the context of a structural model, see \citep{col12}).

To our knowledge,  \citet{DuGu}, in which correlated intensities are constructed for constituent names by using affine factor processes, is the first study addressing the dynamic framework for pricing CDOs. \citet{Sch}, \citet{SAP}, \citet{bslm}, \citet{fb}, \citet{FOS} and \citet{contm} are other examples of dynamic models for CDO pricing. \citet{Sch}, \citet{SAP} and \citet{FOS} are very much in the same spirit that they aim to model the evolution of the forward distribution of the loss process. This allows for a consistent incorporation of the dynamics of credit spreads in the modeling of  multi-name credit derivatives.
Inspired by the HJM framework ( see, \citep{hjm}) for a default-free term structure, \citet{FOS} develop a dynamic no-arbitrage setting for the evolution of the forward credit spreads.  Allowing for feedback and contagion effects, this framework provides a generalization for the aforementioned top-down models. Furthermore, under this general framework an analytically tractable class of doubly-stochastic affine term structure models is proposed.  

\citet{fb} study the dynamic hedging of STCDOs in a framework where spread risk is incorporated along with the default contagion (for a static \emph{infectious default model}, see, \citep{davis2001infectious}. Reckoning with the incompleteness of the market arising from the presence of spread and default risk, this study utilizes a variance-minimizing strategy for the hedging of STCDOs with underlying CDSs (see, e.g., \citep{SC2} for the review of variance-minimizing and other quadratic hedging approaches). Notably, \citet{fb} show that the variance minimization provides a model-based endogenous interpolation between the hedging against spread risk and default risk. For the hedging of a STCDO position with a dynamically rebalanced portfolio on the index, \citet{FS} derive the variance-minimizing hedging strategy based on a general top-down framework. \citet{ascheberg2013hedging} tests the performance of more than 10 credit models during the crisis period and find that the hedging performances of all models are unsatisfactory. In particular they showed that the top-down models fail to hedge equity tranches.

The remainder of this paper is structured as follows: Section 2 gives the basics of STCDOs, introduces the two-factor affine model as well as the corresponding variance minimizing strategy. Section 3 provides the estimation methodology in detail. Section 4, which is on the data implementation, presents the data set, discusses the practical issues and gives results on estimation, in-sample hedging and simulation. Section 5 summarizes the main findings and concludes the paper. The technical appendix contains auxiliary results and proofs. 

\section{Modeling Framework}
We fix a stochastic basis $(\Omega,\mathcal{F},(\mathcal{F}_t),\mathbb{P})$ satisfying the usual conditions and where $\mathbb{P}$ denotes the historical probability measure. We consider a CDO pool of credits with the overall outstanding notional normalized to $1$. The aggregate loss process, representing the ratio of CDO losses realized by time $t$, is indicated by $L_t$. It is assumed that
\[L_t=\sum_{s\leq t}\Delta L_s\]
is $[0,1]$-valued, non-decreasing marked-point process with absolutely continuous $\mathbb{P}$-compensator $\nu(t,dx)dt$.

We define, for any $T>0$ and $x\in [0,1]$, the hypothetical $(T,x)$-bond which pays $1_{\{L_T\leq x\}}$ at maturity $T$. That is, we have a defaultable zero-recovery zero-coupon bond, which pays $1$ if the realized loss fraction at $T$ is less than or equal to $x$, and $1$ otherwise. The $(T,x)$-bond price at time $t\leq T$ is denoted by $P(t,T,x)$. It follows that the risk-free zero-coupon bond price $P(t,T)$ equals $P(t,T,1)$. Throughout, we assume that the risk-free rate $r$ is constant, so that $P(t,T)=e^{-r(T-t)}$.

$(T,x)$-bonds have the {\it spanning property}: any European contingent claim $F(L_T)$ with absolutely continuous payoff function $F$ can be decomposed into the sum of $(T,x)$-bond payoffs, $F(L_T)=F(1)-\int_0^1 F'(x)1_{\{L_T\le x\}}dx$. Hence the claim can be replicated by the static portfolio with the value process
\begin{equation}\label{repli}
 F(1)P(t,T)-\int_0^1 F'(x)P(t,T,x)dx.
\end{equation}
This, in particular, allows for the pricing of a STCDO via $(T,x)$-bonds.
\subsection{Single-Tranche CDOs}
We recall that (see e.g.\cite[Chapter 9]{EFM}) a STCDO issued at time $0$ is specified by a sequence of coupon payment dates $0<T_1<\dots<T_n$, a tranche with attachment and detachment point $x_1<x_2$ in $[0,1]$, and a coupon rate $\kappa_0^{(x_1,x_2]}$. The attachment point $x_1$ indicates the level at which losses in the underlying CDO pool begin to erode the notional of the tranche. At the detachment point $x_2$ the full tranche is written down. Note that $(x_1,x_2)=(0,1)$ corresponds to the entire {\emph{index}}. The holder of a long position in a STCDO
\begin{itemize}
\item receives $\kappa_0^{(x_1,x_2]} \times H^{(x1,x2]}(L_{T_i})$ at $T_i$, $i=1,2,\dots,n$ (\textit{coupon leg})
\item pays $-\Delta H^{(x1,x2]}(L_t)=H^{(x1,x2]}(L_{t-})-H^{(x1,x2]}(L_t)$ at any time $t\le T_n$ where $\Delta L_t\neq0$ (\textit{protection leg})
\end{itemize}
where we define $H^{(x1,x2]}(x):=\int_{x_1}^{x_2}1_{\{x\leq y\}}dy=(x_2-x)^+-(x_1-x)^+$. It follows from \eqref{repli} that the value at time $t\le T_n$ of the coupon leg is given by $\kappa_0^{(x_1,x_2]}\times S^{(x1,x2]}(t)$, where we define the annuity factor
\begin{align}S^{(x1,x2]}(t)=  \sum_{t<T_i }\int_{x_1}^{x_2}P(t,T_i,x) dx,\end{align}
and the time $t$ value of the protection leg is
\begin{align}V^{(x_1,x_2]}_P(t)=\displaystyle\int_{x_1}^{x_2}\left(  1_{\{L_t\leq x\}}-P(t,T_n,x)-r\int_{t}^{T_n}P(t,u,x)du  \right)dx.\end{align}
The par-coupon rate at $t$ is then defined as the rate $\kappa_t^{(x_1,x_2]}$ by which $\kappa_0^{(x_1,x_2]}$ would need to be replaced for rendering the two legs equal in value. That is,
\begin{equation}\label{kappa}
\kappa_t^{(x_1,x_2]}=\displaystyle\frac{V^{(x_1,x_2]}_P(t)}{S^{(x1,x2]}(t)}.
\end{equation}
In turn, this implies that the time $t$ spot value, $\Gamma_t^{(x_1,x_2]}$, of a long position in this STCDO equals
\begin{equation}\label{gamma}
\Gamma_t^{(x_1,x_2]}=\left(\kappa_0^{(x_1,x_2]}-\kappa_t^{(x_1,x_2]}\right)D^{(x1,x2]}(t).
\end{equation}
The discounted gains process, $G_t^{(x_1,x_2]}$, from holding a long position in the STCDO equals the sum of the accumulated discounted cash flows,
\[ A_t^{(x_1,x_2]}=\kappa_0^{(x_1,x_2]} \displaystyle{\sum}_{T_i\leq t}e^{-rT_i}H(L_{T_i})+\int_0^te^{-ru}dH(L_u),\]
and the discounted spot value. That is,
\begin{equation}\nonumber
G_t^{(x_1,x_2]}=A_t^{(x_1,x_2]}+e^{-rt}\Gamma_t^{(x_1,x_2]}.
\end{equation}
One can show that the discounted gains process satisfies the following dynamics
 \begin{align}
 %\begin{split}
dG^{(x_1,x_2]}_t&=\!\int_{x_1}^{x_2}\!\!\Big\{e^{-rt}\!\Big(\kappa^{(x_1,x_2]}_0\!\! \sum_{t<T_i}dP(t,T_i,x)\!-\!r P(t,T_i,x)dt\!+\! dP(t,T_n,x)-rP(t,T_n,x)dt\\
&+r1_{\{L_t\leq x\}}dt \!\Big)\!+\! d\gamma(t,x)\! \Big\}dx,
%\end{split}
\label{G-dyn}\end{align}
where $\gamma(t,x)=re^{-rt}\int_t^{T_n}P(t,u,x)du$.

\subsection{Model Specification}
In the following, we specify an affine factor model for the stochastic evolution of the $(T,x)$-bond prices. First, we consider an ${\mathbb R}^2_+$-valued affine state process $(Y,Z)$ given by
\begin{eqnarray}\label{Y_P}
dY_t&=&\kappa_y\left(Z_t-Y_t\right)dt+\sigma_y\sqrt{Y_t}dW^y_t \\
\label{Z_P}dZ_t&=&\kappa_z\left(\theta_z-Z_t\right)dt+\sigma_z\sqrt{Z_t}dW^z_t
\end{eqnarray}
for parameters $\kappa_y \geq 0$, $\kappa_z\theta_z\geq 0$, $\sigma_y,\sigma_z\ge 0$, and where $W=(W^y,W^z)^\top$ is a two-dimensional $\mathbb{P}$-Brownian motion. The factor $Z$ represents the stochastic mean reversion level of factor $Y$. Note that, during the empirical analysis given in Section~\ref{daI}, we also consider the nested one-factor model with $Y$ having the dynamics given above and with a constant mean reversion level $Z_t\equiv\theta_y\ge 0$.

In order to obtain arbitrage-free prices, we need the dynamics of the state process under a risk-neutral pricing measure ${\mathbb Q}\sim{\mathbb P}$. To preserve the affine structure of the state process under ${\mathbb Q}$ (see e.g.\ \citet{CFK} for details) we specify the market price of risk  process $\lambda_t=(\lambda^y_t,\lambda^z_t)$ in the following way
\begin{equation}\nonumber
\lambda^y_t=\frac{\lambda_y\sqrt{Y_t}}{\sigma_y},\quad \lambda^z_t=\frac{\lambda_z\sqrt{Z_t}}{\sigma_z}.
\end{equation}
Then $\widetilde{W}_t=(\widetilde{W}^y_t,\widetilde{W}^z_t)=W_t+\int_0^t \lambda_s^\top ds$ becomes a Brownian motion under $\mathbb{Q}$ with Radon--Nikodym density process
\[ \frac{d\mathbb Q}{d\mathbb P}|_{\mathcal F_t} = \exp\left(-\int_0^t \lambda_s dW_s-\frac{1}{2}\int_0^t \|\lambda_s\|^2ds\right) .\]
The dynamics of the state process under $\mathbb{Q}$ read
\begin{eqnarray}\nonumber
dY_t&=&\left(\kappa_y+\lambda_y\right)\left(\frac{\kappa_y}{\kappa_y+\lambda_y}Z_t-Y_t\right)dt+\sigma_y \sqrt{Y_t}d\widetilde{W}^y_t\\
\nonumber dZ_t&=&\left(\kappa_z+\lambda_z\right)\left(\frac{\kappa_z}{\kappa_z+\lambda_z}\theta_z-Z_t\right)dt+\sigma_z \sqrt{Z_t}d\widetilde{W}^z_t.
\end{eqnarray}
We assume no market price of default event risk. That means the ${\mathbb Q}$-compensator $\nu^{\mathbb Q}(t,dx)$ of the loss process $L$ is equal to the ${\mathbb P}$-compensator:
\begin{equation}\nonumber
  \nu^{\mathbb Q}(t,dx)=\nu(t,dx).
\end{equation}
The following theorem provides the affine specification for the $(T,x)$-bond market.
\begin{thm}\label{thmaffine} 
Let $\alpha$, $\beta_{y}$ and $\beta_{z}$ be some non-increasing and c\`{a}dl\`{a}g functions with $\alpha(x)= r$ and $\beta_{y}(x)=\beta_{z}(x)=0$ for $x\geq 1$. Then, under the above assumptions, there exists a loss process $L$ which is unique in law
such that
\begin{equation}\label{p2}P(t,T,x)=1_{\{L_t\leq x\}}e^{-A(T-t,x)-B_y(T-t,x)Y_t-B_z(T-t,x)Z_t}\end{equation}defines an arbitrage-free $(T,x)$-bond market,\footnote{That is, $e^{-rt}P(t,T,x)$, $0\le t\le T$, is a ${\mathbb Q}$-martingale, for any $T>0$ and $x\in [0,1]$.}where the functions $A$, $B_y$ and $B_z$ solve the Riccati equations
\begin{align}
\nonumber \partial_{\tau} A(\tau,x)&=\alpha(x)+\kappa_z\theta_zB_z(\tau,x) ,\\
\nonumber A(0,x)&=0, \\
\nonumber \partial_{\tau}B_y(\tau,x)&=\beta_y(x)-(\kappa_y+\lambda_y)B_y(\tau,x)-\frac{1}{2}\sigma_y^2B_y(\tau,x)^2  ,\\
\nonumber B_y(0,x)&= 0, \\
\nonumber \partial_{\tau} B_z(\tau,x)&=\beta_z(x)+\kappa_yB_y(\tau,x)-(\kappa_z+\lambda_z)B_z(\tau,x)-\frac{1}{2}\sigma_z^2B_z(\tau,x)^2,\\
\nonumber B_z(0,x)&=0.
\end{align}
Moreover, the compensator of $L$ is given by
\begin{equation}\nonumber
\nu(t,(0,x])=\alpha(L_{t-})-\alpha(L_{t-} +x)+(\beta_y(L_{t-})-\beta_y(L_{t-} +x))Y_t+(\beta_z(L_{t-})-\beta_z(L_{t-} +x))Z_t.
\end{equation}
\end{thm}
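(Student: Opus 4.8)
The plan is to build the two ingredients in order---first the diffusion $(Y,Z)$, then the loss process $L$ on top of it---and finally to verify the martingale property by an Itô computation that reproduces the three Riccati equations. Since the dynamics \eqref{Y_P}--\eqref{Z_P} do not involve $L$, the affine diffusion $(Y,Z)$ under $\mathbb{Q}$ exists and is unique in law by the standard theory of affine (CIR-type) processes, and it stays in $\mathbb{R}^2_+$. For each fixed $x$ the equation for $B_y$ is a scalar Riccati ODE with coefficients constant in $\tau$, so it has a unique solution; since $\beta_y(x)\ge 0$ (because $\beta_y$ is non-increasing with $\beta_y(1)=0$), this solution is non-negative and extends globally to $\tau\in[0,\infty)$. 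Feeding $B_y$ into the $B_z$-equation gives another globally solvable Riccati ODE, and $A$ is recovered by integrating $\partial_\tau A=\alpha(x)+\kappa_z\theta_z B_z$. Monotonicity and c\`adl\`ag regularity in $x$ of $A,B_y,B_z$ are inherited from those of $\alpha,\beta_y,\beta_z$.

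Next I would construct $L$. Taking the stated formula for $\nu(t,(0,x])$ as the candidate $\mathbb{Q}$-compensator, I would first check admissibility: it is non-negative and non-decreasing in $x$ because $\alpha,\beta_y,\beta_z$ are non-increasing and $Y_t,Z_t\ge 0$; its total mass over admissible jumps, $\nu(t,(0,1-L_{t-}])=\alpha(L_{t-})-r+\beta_y(L_{t-})Y_t+\beta_z(L_{t-})Z_t$, is finite; and, crucially, the boundary conditions $\alpha(1)=r$, $\beta_y(1)=\beta_z(1)=0$ force $\nu(t,(1-L_{t-},\infty))=0$, so no jump can push $L$ beyond $1$ and $L$ stays $[0,1]$-valued. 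With this admissible state-dependent intensity, existence and uniqueness in law of the marked point process $L$ driven by $(Y,Z)$ follow from the general construction in \citet{FOS}, the compensator pinning down the law given the driving filtration.

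The heart of the argument is the martingale verification. Fix $T>0$ and $x\in[0,1]$ and write $e^{-rt}P(t,T,x)=1_{\{L_t\le x\}}f(t,Y_t,Z_t)$ with $f(t,y,z)=e^{-rt-A(\tau,x)-B_y(\tau,x)y-B_z(\tau,x)z}$ and $\tau=T-t$. Applying the Itô formula for jump-diffusions, and using that $(Y,Z)$ is continuous while $1_{\{L_t\le x\}}$ is pure jump, I would split the drift of $M_t:=1_{\{L_t\le x\}}f$ into a diffusion part $\partial_t f+\mathcal{A}^{\mathbb{Q}}f$, where $\mathcal{A}^{\mathbb{Q}}$ is the $\mathbb{Q}$-generator of $(Y,Z)$, and a compensated jump part. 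On $\{L_{t-}\le x\}$ the only jumps that move $M_t$ are those crossing level $x$, so the jump drift equals $-f\,\nu(t,(x-L_{t-},1-L_{t-}])=-f\big(\alpha(x)-r+\beta_y(x)Y_t+\beta_z(x)Z_t\big)$, where the second equality uses the compensator formula together with the boundary values at $1$. Dividing the total drift by $f>0$ and collecting the terms constant in $(Y_t,Z_t)$ and those proportional to $Y_t$ and to $Z_t$ yields exactly the three Riccati equations; hence the drift vanishes and $e^{-rt}P(t,T,x)$ is a local $\mathbb{Q}$-martingale. Boundedness of the bonds in $[0,1]$ and integrability of the intensity then upgrade this to a true martingale, simultaneously for every $(T,x)$.

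I expect the main obstacle to be the second step rather than the computation: establishing existence and uniqueness in law of $L$ for the prescribed state-dependent compensator, and confirming self-consistency---namely that one and the same $L$ renders the entire family of $(T,x)$-bonds arbitrage-free. The Itô computation, once the compensator is correctly substituted, is essentially bookkeeping; the conceptual work lies in verifying the admissibility of $\nu$ (validity as a measure, finiteness, and the $[0,1]$-confinement via the boundary conditions) and in invoking the appropriate well-posedness result for the loss process.
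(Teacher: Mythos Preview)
Your proposal is correct and matches the paper's approach: the paper's own proof consists solely of the sentence ``Proof follows the same arguments as in Theorem 7.2 of \citet{FOS},'' and what you have outlined---constructing the affine diffusion, solving the Riccati system, building $L$ from the prescribed compensator via the well-posedness results in \citet{FOS}, and then verifying the martingale property by an It\^o computation that isolates the drift into constant, $Y_t$-, and $Z_t$-terms---is precisely the content of that referenced theorem specialised to the present two-factor setup. Your identification of the admissibility check for $\nu$ and the existence/uniqueness of $L$ as the substantive step is also accurate.
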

\begin{proof}
Proof follows the same arguments as in Theorem 7.2 of \citet{FOS}.
\end{proof}
Plugging \eqref{p2} into \eqref{G-dyn} yields the following corollary (see also \cite[Eqn.~(20)]{FS}).
\begin{cor}\label{coraffine}
The implied discounted gains process is a square-integrable ${\mathbb Q}$-martingale with dynamics 
\begin{equation}\nonumber
dG_t^{(x_1,x_2]}=e^{-\int_0^t r_udu}\left(  B_t^{(x_1,x_2]}d\widetilde{W}_t+\int_{(0,1]}C_t^{(x_1,x_2]}(\xi)(\mu(dt,d\xi)-\nu(t,d\xi)dt) \right)
\end{equation}
where $\mu(dt,dx)$ denotes the integer-valued random measure associated to the jumps of the loss process $L$, and
\begin{align*}
B_t^{(x_1,x_2]}&=\int_{(x_1,x_2]}\Bigg{\{}\kappa_0^{(x_1,x_2]}\sum_{t<T_i}P(t,T_i,x)\beta(t,T_i,x)\\
&+P(t,T_n,x)\beta(t,T_n,x)+r\int_t^{T_n}P(t,u,x)\beta(t,u,x)du\Bigg{\}}dx\\
C_t^{(x_1,x_2]}(\xi)&=-\int_{(x_1,x_2]}\Bigg{\{}\kappa_0^{(x_1,x_2]}\sum_{t<T_i}P(t-,T_i,x)1_{\{L_{t-}+\xi>x\}}\\
&+P(t-,T_n,x)1_{\{L_{t-}+\xi>x\}}+r\int_t^{T_n}P(t-,u,x)1_{\{L_{t-}+\xi>x\}}du\Bigg{\}}dx
\end{align*}
with
\begin{eqnarray}\nonumber
\beta(t,T,x)&=&\left(-B_y(T-t,x)\sigma_y \sqrt{Y_t}\,,\,-B_z(T-t,x)\sigma_z \sqrt{Z_t}\right)^{\top}.
\end{eqnarray}
\end{cor}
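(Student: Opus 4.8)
The plan is to apply It\^o's formula to the affine representation \eqref{p2} and to substitute the resulting decomposition of $dP(t,T,x)$ into the gains dynamics \eqref{G-dyn}, then read off the diffusion and jump coefficients. First I would differentiate $P(t,T,x)=1_{\{L_t\le x\}}e^{-A-B_yY_t-B_zZ_t}$. Since the factors $Y,Z$ are continuous and $L$ is the only source of jumps, the continuous martingale part comes from $\partial_{Y}P=-B_yP$ and $\partial_{Z}P=-B_zP$ together with the state dynamics \eqref{Y_P}--\eqref{Z_P} under $\mathbb{Q}$, giving exactly the diffusion coefficient $P(t,T,x)\,\beta(t,T,x)$ with $\beta$ as stated. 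The jump of $P$ when $L$ moves by $\xi$ equals $\Delta P=(1_{\{L_{t-}+\xi\le x\}}-1_{\{L_{t-}\le x\}})e^{-A-B_yY_{t-}-B_zZ_{t-}}=-1_{\{L_{t-}+\xi>x\}}P(t-,T,x)$, which is precisely the kernel appearing in $C_t^{(x_1,x_2]}$. The crucial structural input is Theorem \ref{thmaffine}: because $e^{-rt}P(t,T,x)$ is a $\mathbb{Q}$-martingale, the Riccati equations force the drift of $dP(t,T,x)$ to be $rP(t,T,x)\,dt$, so that $dP=rP\,dt+P_{t-}\beta\,d\widetilde{W}_t+\int_{(0,1]}\Delta P(\xi)(\mu-\nu)(dt,d\xi)$.

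Next I would insert this decomposition into \eqref{G-dyn} and verify that every finite-variation term cancels. The drift $rP(t,T_i,x)\,dt$ hidden in $dP(t,T_i,x)$ is annihilated by the explicit $-rP(t,T_i,x)\,dt$ in the coupon leg, and likewise for the $T_n$ term in the protection leg. For the $\gamma$-term I would use $\gamma(t,x)=re^{-rt}\int_t^{T_n}P(t,u,x)\,du$ and Leibniz' rule: the boundary contribution $-re^{-rt}P(t,t,x)\,dt=-re^{-rt}1_{\{L_t\le x\}}\,dt$, where $P(t,t,x)=1_{\{L_t\le x\}}$ because $A(0,x)=B_y(0,x)=B_z(0,x)=0$, cancels the explicit $e^{-rt}r1_{\{L_t\le x\}}\,dt$, while the drift $re^{-rt}\int_t^{T_n}rP\,du\,dt$ arising from $\int_t^{T_n}dP(t,u,x)\,du$ cancels the term $-r^2e^{-rt}\int_t^{T_n}P\,du\,dt$. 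Hence $dG^{(x_1,x_2]}_t$ is drift-free, consistent with its being a $\mathbb{Q}$-martingale.

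Collecting the surviving martingale pieces, the Brownian parts of $\kappa_0^{(x_1,x_2]}\sum_{t<T_i}dP(t,T_i,x)$, of $dP(t,T_n,x)$, and of $re^{-rt}\int_t^{T_n}dP(t,u,x)\,du$ assemble, after factoring out $e^{-rt}$ and interchanging the $dx$-integral with the stochastic integrals by Fubini, into $e^{-rt}B_t^{(x_1,x_2]}\,d\widetilde{W}_t$; the compensated jump parts assemble analogously into $e^{-rt}\int_{(0,1]}C_t^{(x_1,x_2]}(\xi)(\mu-\nu)(dt,d\xi)$, yielding precisely the stated $B_t^{(x_1,x_2]}$ and $C_t^{(x_1,x_2]}$.

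Finally I would establish square-integrability, which I expect to be the main obstacle, since the algebraic bookkeeping above is routine once the drift of $P$ is known. The diffusion coefficient contains $\sqrt{Y_t}$ and $\sqrt{Z_t}$ through $\beta$, so it suffices to show $\mathbb{E}^{\mathbb{Q}}[\int_0^{T_n}(Y_t+Z_t)\,dt]<\infty$; this follows from the affine/polynomial-preserving moment formulas for $(Y,Z)$ together with the boundedness of $B_y,B_z$ and of $P\in[0,1]$ on the compact set $[0,T_n]\times[0,1]$. For the jump integral I would bound $|C_t^{(x_1,x_2]}(\xi)|$ by a constant uniformly in $\xi$ and use that the compensator $\nu(t,d\xi)$ is affine in $(Y,Z)$ to control $\mathbb{E}^{\mathbb{Q}}[\int_0^{T_n}\int_{(0,1]}|C_t(\xi)|^2\nu(t,d\xi)\,dt]$. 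Once these two bounds are in place, both stochastic integrals are genuine square-integrable $\mathbb{Q}$-martingales and the claim follows.
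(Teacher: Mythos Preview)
Your proposal is correct and follows exactly the approach the paper indicates: the paper's proof consists of the single sentence ``Plugging \eqref{p2} into \eqref{G-dyn} yields the following corollary (see also \cite[Eqn.~(20)]{FS}),'' and your argument simply carries this out in detail, including the drift cancellation via the Riccati equations and the martingale property from Theorem~\ref{thmaffine}. Your treatment of the $\gamma$-term and of square-integrability goes well beyond what the paper spells out, but is consistent with it and fills in the steps the paper leaves implicit.
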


Here, we emphasize that the functions $\alpha$, $\beta_y$ and $\beta_z$ in Theorem~\ref{thmaffine} are exogenous and determine the default event arrival intensity
\begin{equation}\label{arrintense}\Lambda_t =\nu(t, (0, 1])=\alpha(L_{t-})-r+\beta_y(L_{t-})Y_t+\beta_z(L_{t-})Z_t.\end{equation}
as well as the cumulative loss given default distribution function
\[F_L(t, x) = \frac{\nu(t, (0, x])}{\Lambda_t  }.\]

We henceforth fix some nonnegative parameters $\gamma, a_0, b_0, c_0\geq 0$ , and set
\begin{gather*}
 \alpha(x)= \gamma\left(e^{-a_0(x\wedge 1)}-e^{-a_0}\right)+r,\\
\beta_y(x)= e^{-b_0(x\wedge 1)}-e^{-b_0},\quad
\beta_z(x)=c_0 1_{[0,1)}(x).
\end{gather*}
This yields
\begin{align*}
 \nu\left(t,(0,x]\right)&=\!\gamma\!\left(e^{\!-a_0(L_{t-}\wedge 1)}\!-\!e^{\!-a_0(L_{t-}\!+\!x\wedge 1)}\!\right)\!\!+\!\!\left(e^{\!-b_0(L_{t-}\wedge 1)}\!-\!e^{\!-b_0(L_{t-}+x\wedge 1)}\!\right)\! Y_t\!+\!c_0 1_{\{1-L_{t-} \leq x\}}Z_t\\
 \Lambda_t&=\!\gamma\!\left(e^{\!-a_0(L_{t-}\wedge 1)}\!-\!e^{\!-a_0}\!\right)\!+\!\left(e^{\!-b_0(L_{t-}\wedge 1)}\!-\!e^{\!-b_0}\!\right)Y_t\!+\!c_01_{[0,1)}(L_{t-})Z_t,
\end{align*}
and hence the loss given default tail distribution becomes
\[ 1-G_L(t,x) = \frac{\gamma\!\left(e^{\!-a_0((L_{t-}+x)\wedge 1)}\!-\!e^{\!-a_0}\!\right)\!+\!\left(e^{\!-b_0((L_{t-}+x)\wedge 1)}\!
-\! e^{\!-b_0}\!\right)\!Y_t\!+\!c_01_{[0,1)}(L_{t-}+x)Z_t}{\gamma\!\left(\!e^{\!-a_0L_{t-}}\!-\!e^{\!-a_0}\!\right)\!
+\!\left(\!e^{\!-b_0L_{t-}}\!-\!e^{\!-b_0}\!\right)\!Y_t\!+\!c_01_{[0,1)}(L_{t-})Z_t}.\]
Evidently, this is a weighted mixture of truncated exponential distributions and a point mass at $1-L_{t-}$. The latter models a catastrophic default event which extinguishes the entire CDO pool. This {\emph{ catastrophic risk component}} will be crucial for fitting the super-senior tranche spread levels, see also \citet{CDG} and \citet{col12}. Notice that the loss given default distribution is not static. The mixture weights are stochastic: $Y_t$ tunes the truncated exponential distribution part, while $Z_t$ drives the catastrophic component.

Finally, integration by parts gives the expected loss given default as
\begin{align}\label{ELG}
 & \int_0^1 x\, G_L(t,dx)=\! \int_0^1 (1-G_L(t,x))dx\\
  \nonumber &=\frac{\!\frac{\gamma}{a_0}\!\left(\!e^{\!-a_0L_{t-}}\!-\!e^{\!-a_0}\!(a_0(1\!-\!L_{t-})\!+\!1)\!\right)\!
+\!\frac{1}{b_0}\!\left(\!e^{\!-b_0L_{t-}}\!-\! e^{\!-b_0}(b_0(1\!-\!L_{t-})\!+\!1)\!\right)\!Y_t\!+\!c_0(1\!-\!L_{t-})Z_t}{\gamma\!
\left(\!e^{\!-a_0L_{t-}}\!-\!e^{\!-a_0}\!\right)\!+\!\left(\!e^{\!-b_0L_{t-}}\!-\!e^{\!-b_0}\!\right)\!Y_t\!+\!c_01_{[0,1)}(L_{t-})Z_t}.
\end{align}

\subsection{Variance-minimizing Hedge}
Recall that the index corresponds to some weighted portfolio of single name CDSs. It is understood that the index can be replicated by holding the respective positions in the constituent CDSs. We thus consider the problem of hedging a long position in a STCDO with a dynamically rebalanced self-financing portfolio in the entire index and the money market account. Due to the presence of both spread risk and the default risk in the model, the market formed by the index and the money market account is incomplete. In particular, we cannot perfectly replicate a long position in a STCDO with trading in the index and money market alone. Instead, we will consider \emph{variance-minimizing strategies}.

A variance-minimizing hedge minimizes the $\mathbb{Q}$-conditional variance of the hedging error. This method yields a self-financing hedging strategy, which coincides with the perfect replication in a complete market situation. We recall from Corollary~\ref{coraffine} that the discounted gains process $G^{(x_1,x_2]}$ is a square-integrable ${\mathbb Q}$-martingale. It follows that, for any time interval $[0,T]$ with $T\le T_n$, the self-financing strategy $\phi=\phi^{VM}$, with
\[ \phi^{VM}_t   = -\frac{d\langle G^{(x_1,x_2]},G^{(0,1]}\rangle_t}{d\langle
G^{(0,1]} \rangle_t}=-\frac{B_t^{(x_1,x_2]}B_t^{(0,1]}+\int_{(0,1]}C_t^{(x_1,x_2]}(\xi)C_t^{(0,1]}(\xi)\nu(t,d\xi)}{(B_t^{(0,1]})^2+\int_{(0,1]}(C_t{(0,1]}(\xi))^2\nu(t,d\xi)},\] along with the initial capital
$c=G_0^{(x_1,x_2]}=0$ is the unique minimizer of the quadratic
hedging error
\[ \inf_{ c,\phi} \mathbb{E^{Q}}\left[ \left(G_T^{(x_1,x_2]}- c+\int_0^{T}
\phi_t\,dG_t^{(0,1]}\right)^2 \right].\]
Here the infimum is taken over all $c\in {\mathbb R}$ and
predictable processes $\phi$ with
\[ \mathbb{E^{Q}}\left[\int_0^{T_n} \phi_t^2\,d\langle
G^{(0,1]}\rangle_t\right]< \infty.\]
See e.g.\  \cite[Theorem 5.1]{FS}. The strategy $\phi^{VM}$ is called the variance-minimizing strategy. Note that it does not depend on the hedging time horizon $T$.

\section{Parameter Estimation and Filtering}\label{EM}

In the current framework the fundamental object modeled is the hypothetical term-structure of $(T,x)$-bonds, which is not directly observable in the market. However, given the market observable par coupon rates for all tranches and the index, the term-structure of $(T,x)$-bonds can be estimated via inverting the formula \eqref{kappa}. More precisely, we assume there are $J$ tranches with attachment/detachment points $0=x_0<x_1<\dots<x_J=1$.  Denoting time to maturity with $\tau=T-t$, we first estimate the zero-coupon discount curve
\begin{equation}\label{dcurve}
\tau \mapsto D(t,\tau,j)=\frac{1}{x_{j}-x_{j-1}}\int_{x_j-1}^{x_{j}}P(t,t+\tau,x)dx
\end{equation}
for all tranches $(x_{j-1},x_{j}]$. This, in turn, gives the implied zero-coupon spread curve
\begin{equation}\label{rcurve}
R(t,\tau,j)=-\frac{1}{\tau}\log D(t,\tau,j)-r.
\end{equation}
Finally, one can get the term structure of $(T,x)$-bonds via interpolating \eqref{rcurve} in $x$. 
%In this study, to estimate the model parameters we use the zero-coupon spread data.

Suppose we are given a data set that has $K$ time steps consisting of zero-coupon spreads for all available tranches and maturities. In each time step of the sample period, the data set is represented by an $(I\times J)$ matrix where $I$ denotes the number of different time to maturities. Having the data set and the model, the estimation procedure comprises the specification of the model parameters in such a way that the model describes the whole data series as much as possible. Under the current modeling setup, the difficulty we face in estimation is due to the unobservability of the factor process. One way to overcome this problem is to use filtering. In a framework where the unobserved factor is a Gaussian process, a Kalman filter yields the exact likelihood function via providing the prediction error and its variance (see  \citet{H}). When using non-Gaussian models, however, the exact likelihood function is not available in most cases. In such a situation, one can use a quasi-maximum likelihood (QML) approach in which the idea is to substitute the exact transition density of the non-Gaussian factor by a Gaussian density with mean and variance being equal to the first two true moments of the factor process. This has been a widely used method especially for the estimation of affine term structure models (see, for example,  \citet{GP}, \citet{CS} and \citet{DS}). Both in the one and two-factor models we presented above, the factor process is non-Gaussian. Hence, to estimate the model parameters and obtain the unobservable factor series we use a QML approach based on a Kalman filter. Since the one-factor model is  nested within the two-factor model, in what follows we will only give the estimation procedure for the later one.

Here, we consider the partition $0=t_0<t_1<\cdots<t_K=T$ on the interval $[0,T]$ and denote the  value of the factor process at time $t_k$ by $(Y_{t_k},Z_{t_k})$. In Kalman filtering, there is the \emph{measurement (observation) equation} expressing the observed data as the sum of a linear function of the unobservable factor and a measurement error. The discrete time evolution of the unobservable factor at $t_k$ is, in turn, expressed by the \emph{transition equation} as linear in $(Y_{t_{k-1}},Z_{t_{k-1}})$. Inserting \eqref{p2}  into \eqref{dcurve} reveals that $R(t_k,\tau,i)$ is not linear in $(Y_{t_k},Z_{t_k})$. In order to obtain a linear measurement equation, we approximate the function $\beta_y$ as follows:
\begin{align*}
\beta_y(x)&\approx \sum_{j=1}^6 \beta_j 1_{[x_{j-1},x_j)}(x)
\end{align*}
where the coefficients are given by
\begin{align*} \beta_j=\frac{1}{x_j-x_{j-1}}\int_{x_{j-1}}^{x_j}\beta_y(x)dx.\end{align*}
This immediately yields
$B_z(\tau,x)=B_z(\tau,x_j)$ for $x\in [x_{j-1},x_j)$, implying that 
\[\int_{x_{j-1}}^{x_j} e^{-A(\tau,x)}dx=e^{-\kappa_z\theta_z\int_0^{\tau}B_z(s,x_j)ds}\int_{x_{j-1}}^{x_j}e^{-\alpha(x)\tau}dx \]
which finally provides the desired linear \emph{measurement equation}. For the $i^{th}$ time to maturity, $\tau_i$, $i=1,\cdots, I$, and tranche $j$, $j=1,\cdots,J$, the measurement equation reads:
\begin{equation}\nonumber
R(t_k,\tau_i,j)=C_z(\tau_i,j)+\frac{1}{\tau_i} (B_y(\tau_i,x_j)Y_{t_k}+B_z(\tau_i,x_j)Z_{t_k})+\epsilon(t_k,j)
\end{equation}
where
\[C_z(\tau_i,j)=\frac{1}{\tau_i}\log(x_j-x_{j-1})+\frac{1}{\tau_i}\kappa_z\theta_z\int_0^{\tau_i}\!\!B_z(s,x_j)ds-\frac{1}{\tau_i}\log\left(\!\!
                                                                \begin{array}{c}
                                                                  \int_{x_{j-1}}^{x_j}e^{-\gamma(e^{-a_0(x\wedge1)}-e^{-a_0})\tau_i}dx \\
                                                                \end{array}
                                                              \!\!\right)\]
and measurement errors, $\epsilon(t_k,j)$, are assumed to be i.i.d.\ $N(0,\sqrt{h_j})$, for some $h_j>0$, showing that the variance of the error depends on the tranche $j$ only. Now we define a new index $l(i,j)=(j-1)I+i$ and introduce $R_{t_k}$, the $(IxJ)$-dimensional vector which has $R(t_k,\tau_i,j)$ in its $l(i,j)$th entry. Furthermore, we denote by $H$ the corresponding $(IxJ)\times (IxJ)$ diagonal covariance matrix of observation errors which, for $i=1,\cdots,I$, has $h_j$ as the $l(i,j)$th diagonal entries.

Let $\mathbb{P}(Y_{t_k},Z_{t_k}|Y_{t_{k-1}},Z_{t_{k-1}})$ designate the transition density, which is the probability density of the factor at time $t_k$ given its value at time $t_{k-1}$. In line with the QML approach, we substitute the exact transition density of the factor by a Gaussian density, that is \[ \mathbb{P}(Y_{t_k},Z_{t_k}|Y_{t_{k-1}},Z_{t_{k-1}})\sim N(\mu_{t_k},Q_{t_k})\]  where the conditional mean $\mu_{t_k}$ and the covariance matrix $Q_{t_k}$ are distributed in such a way that the first moments of the approximate Normal and the exact transition density are equal. As the next step  we compute $\mu_{t_k}$ and $Q_{t_k}$ in the following proposition. This proposition mainly uses the fact that the factor process is affine and provides the desired expressions by utilizing the polynomial preserving property of affine processes.
\begin{prop}\label{condmoment} Suppose the process $(Y,Z)$ satisfies the dynamics given in \eqref{Y_P}-\eqref{Z_P}, then the $\mathbb{P}$-conditional expectation of $Y_t$ and $Z_t$  is in the following form:
\begin{align*}\begin{split}
 E[Y_t|Y_0\!=\!y,Z_0\!=\!z]&=\!\displaystyle\frac{\theta_z}{\kappa_z\!-\!\kappa_y}\!\Big(\!\kappa_z(1\!-\!e^{\!-\kappa_y t})\!-\!\kappa_y(1\!-\!e^{\!-\kappa_z t})\!\Big)\!+\!e^{\!-\kappa_y t}y
\!+\!e^{\!-\kappa_z t}\!\frac{\kappa_y}{\kappa_z\!-\!\kappa_y}\!\Big(\!e^{\!t(\kappa_z-\kappa_y)}\!-\!1\!\Big)z, \end{split}\\
E[Z_t|Y_0\!=\!y,Z_0\!=\!z]&=\!\theta_z(1\!-\!e^{\!-\kappa_zt})\!+\!e^{\!-\kappa_z t}z.
\end{align*}
Moreover,  the conditional variances $V_y$, $V_z$ and the conditional covariance $V_{yz}$ are given by:
\begin{align} \label{Vy}\begin{split}
V_y(t,y,z)&=\Bigg(e^{-(5\kappa_z+7\kappa_y)t}\Big(e^{(5\kappa_z+7\kappa_y)t}(\kappa_z-2\kappa_y)(\kappa_z-\kappa_y)^2(\kappa_z(\kappa_z\!+\!\kappa_y)\sigma_y^2+\kappa_y^2\sigma_z^2)\theta_z\\
&-2e^{(5\kappa_z\!+\!6\kappa_y)t}\kappa_z(\kappa_z\!-\!2\kappa_y)(\kappa_z^2-\kappa_y^2) \sigma_y^2(\kappa_z(\theta_z-y)+\kappa_y(y-z))+e^{(3\kappa_z+7\kappa_y)t}(\kappa_z\\ &-2\kappa_y)\kappa_y^3(\kappa_z+\kappa_y)\sigma_z^2(\theta_z-2z+2e^{(4\kappa_z+7\kappa_y)t}
\kappa_y^2(\kappa_y^2-\kappa_z^2) (\kappa_z(\sigma_y^2-2\sigma_z^2)+2\kappa_y\sigma_z^2)\\
&\times(\theta_z\!-\!z)-4e^{(4\kappa_z\!+\!6\kappa_y)t}\kappa_z(\kappa_z\!-\!2\kappa_y)\kappa_y^2\sigma_z^2(\kappa_z\theta_z\!-\!(\kappa_z+\kappa_y)z)+e^{5(\kappa_z\!+\!\kappa_y)t}\kappa_z(\kappa_z\!+\!\kappa_y)\\
&\times(\kappa_z^3\sigma_y^2(\theta_z-2y)-2\kappa_z^2\kappa_y\sigma_y^2(\theta_z-4y+z)+2\kappa_y^3(2\sigma_y^2y-\sigma_y^2z+\sigma_z^2z)\\
&+\kappa_z\kappa_y^2(-\sigma_z^2\theta_z+\sigma_y^2(\theta_z-10y+4z)))\Big)\Bigg)
\Bigg/\Bigg(2\kappa_z(\kappa_z\!-\!2\kappa_y)(\kappa_z\!-\!\kappa_y)^2\kappa_y(\kappa_z\!+\!\kappa_y)\Bigg),\end{split}\end{align}

\begin{align}
\label{Vz}V_z(t,y,z)&=\displaystyle\frac{\sigma_z^2 e^{-2\kappa_z t}(e^{\kappa_z t}-1)((e^{\kappa_zt}-1)\theta_z+2z)}{2\kappa_z},
\end{align}
 \begin{align}
\nonumber V_{yz}(t,y,z)&=\frac{e^{-(2\kappa_z+\kappa_y)t}\sigma_z^2}{2(\kappa_z^3-\kappa_z\kappa_y^2)} \Big(e^{(2\kappa_z+\kappa_y)t}(\kappa_z-\kappa_y)\kappa_y\theta_z-e^{\kappa_yt}\kappa_y(\kappa_z+\kappa_y)(\theta-2z)\\
\label{Vzy}&-2e^{(\kappa_z+\kappa_y)t}(\kappa_z^2-\kappa_y^2)(\theta_z-z)
 +2e^{\kappa_zt}\kappa_z(\kappa_z\theta_z-(\kappa_z+\kappa_y)z)\Big).
\end{align}
\end{prop}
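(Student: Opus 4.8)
The plan is to exploit the affine (polynomial-preserving) structure of $(Y,Z)$: the generator of \eqref{Y_P}--\eqref{Z_P} sends polynomials of degree at most two into polynomials of degree at most two, so the raw moments of order $\le 2$ satisfy a closed, triangular system of linear ODEs that I would integrate one equation at a time. First I would write down the $\mathbb{P}$-generator of the diffusion,
\[
\mathcal{A}f=\kappa_y(z-y)\,\partial_y f+\kappa_z(\theta_z-z)\,\partial_z f+\tfrac12\sigma_y^2 y\,\partial_{yy}f+\tfrac12\sigma_z^2 z\,\partial_{zz}f,
\]
where the diffusion matrix $\mathrm{diag}(\sigma_y^2 y,\sigma_z^2 z)$ carries no cross term (so there is no $\partial_{yz}$ contribution) because $W^y,W^z$ are independent. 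Applying $\mathcal{A}$ to the monomials $1,z,y,z^2,yz,y^2$ shows each is mapped to a combination of monomials of equal or lower total degree. By Dynkin's formula, justified by the finiteness of all polynomial moments of the CIR-type process, the conditional moments $m_Z(t)=E[Z_t]$, $m_Y(t)=E[Y_t]$, $p_Z(t)=E[Z_t^2]$, $c(t)=E[Y_tZ_t]$, $p_Y(t)=E[Y_t^2]$ (all conditional on $Y_0=y,Z_0=z$) then solve the cascade
\begin{align*}
\dot m_Z&=\kappa_z\theta_z-\kappa_z m_Z,\\
\dot m_Y&=\kappa_y m_Z-\kappa_y m_Y,\\
\dot p_Z&=(2\kappa_z\theta_z+\sigma_z^2)\,m_Z-2\kappa_z p_Z,\\
\dot c&=\kappa_y p_Z+\kappa_z\theta_z m_Y-(\kappa_y+\kappa_z)c,\\
\dot p_Y&=2\kappa_y c+\sigma_y^2 m_Y-2\kappa_y p_Y,
\end{align*}
with initial data $m_Z(0)=z$, $m_Y(0)=y$, $p_Z(0)=z^2$, $c(0)=yz$, $p_Y(0)=y^2$.

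Then I would integrate this system top to bottom by the integrating-factor method. The first line gives $m_Z$ directly, reproducing the stated formula; substituting it into the second line and applying variation of constants yields $m_Y$, where the denominator $\kappa_z-\kappa_y$ arises from integrating $e^{\kappa_y s}e^{-\kappa_z s}$. The same procedure yields $p_Z$, then $c$, then $p_Y$; at each stage the forcing is a known sum of exponentials at rates drawn from $\{0,\kappa_y,\kappa_z,2\kappa_y,2\kappa_z,\kappa_y+\kappa_z\}$, and integrating them against the successive integrating factors $e^{2\kappa_z t}$, $e^{(\kappa_y+\kappa_z)t}$, $e^{2\kappa_y t}$ produces exactly the resonance denominators $(\kappa_z-\kappa_y)$, $(\kappa_z+\kappa_y)$ and $(\kappa_z-2\kappa_y)$ visible in \eqref{Vy}--\eqref{Vzy} (the factor $\kappa_z-2\kappa_y$ coming from a rate-$\kappa_z$ forcing term integrated against $e^{2\kappa_y t}$). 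The required conditional (co)variances are recovered at the end as the central moments $V_z=p_Z-m_Z^2$, $V_{yz}=c-m_Ym_Z$ and $V_y=p_Y-m_Y^2$; note that $V_z$ collapses to the classical CIR variance $\eqref{Vz}$.

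The method is thus routine, and the main obstacle is purely the bookkeeping of the final algebraic simplification rather than any conceptual difficulty. The expression for $V_y$ mixes all six exponential rates above, and after clearing the common denominator $2\kappa_z(\kappa_z-2\kappa_y)(\kappa_z-\kappa_y)^2\kappa_y(\kappa_z+\kappa_y)$ one must collect a large number of terms to reach the bulky form \eqref{Vy}; carrying the exponentials and resonance denominators consistently through the two nested variation-of-constants steps for $c$ and $p_Y$ is where the term count grows fastest, so a symbolic computation is the natural way to certify the closed forms. I would carry out the derivation in the generic regime where $\kappa_y,\kappa_z>0$ with $\kappa_y\neq\kappa_z$ and $\kappa_z\neq 2\kappa_y$ (so the rates are pairwise distinct and the integrals are non-resonant), and recover the coincident-rate cases by continuity, since both sides are smooth in the parameters.
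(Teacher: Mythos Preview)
Your proposal is correct and rests on the same underlying idea as the paper---the generator of the affine diffusion preserves the space of polynomials of degree at most two, so the first and second moments satisfy a closed, triangular linear ODE system---but the two proofs organise the computation differently. The paper works with the Kolmogorov \emph{backward} equation $\partial_\tau f=\mathcal{L}f$ and a polynomial ansatz in the spatial variables $(y,z)$, obtaining ODEs for the time-dependent coefficients $g_0(\tau),g_y(\tau),g_z(\tau),\dots$; it then needs a separate verification lemma to confirm that the $C^{1,2}$ solution of the backward PDE really equals the conditional expectation (this is their Lemma in the appendix, controlling the local martingale via a polynomial-growth bound on the spatial gradient). You instead go \emph{forward} via Dynkin's formula, writing ODEs directly for the moment functions $m_Z(t),m_Y(t),p_Z(t),c(t),p_Y(t)$ with the initial point $(y,z)$ entering only through the initial data; the required justification is the finiteness of polynomial moments, which you note. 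The two ODE systems are isomorphic (your cascade is the paper's coefficient system evaluated along the flow), and both produce the same resonance denominators $\kappa_z-\kappa_y$, $\kappa_z+\kappa_y$, $\kappa_z-2\kappa_y$. Your route is a little more economical in that the polynomial dependence on $(y,z)$ comes out automatically from linearity rather than being imposed as an ansatz; the paper's route makes the polynomial structure in the state explicit from the outset, which is convenient when one later needs the coefficients separately (as in their Kalman filter transition matrices $M_0,M_1$).
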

\begin{proof} See Appendix.
\end{proof}

We are now ready to give the \emph{transition equation} implied by the two-factor model. Denote the time increment by $\delta t=t_k-t_{k-1}$ and define
 
\begin{align*}
M_0(t_k)&=\left(
         \begin{array}{c}
         \frac{\theta_z}{\kappa_z-\kappa_y}
         \left(\kappa_z(1-e^{-\kappa_y \delta t})-\kappa_y(1-e^{-\kappa_z \delta t})\right)\\
           \theta_z(1-e^{-\kappa_z \delta t})\\
         \end{array}
       \right),\\
    M_1(t_k&=\left(
           \begin{array}{cc}
            e^{-\kappa_y \delta t} & e^{-\kappa_z \delta t}\frac{\kappa_y}{\kappa_z-\kappa_y}\left(e^{\delta t(\kappa_z\!-\!\kappa_y)}-1\right) \\
             0 & e^{-\kappa_z \delta t} \\
           \end{array}
         \right).
      \end{align*}
The unobserved state at time $t_k$ is evolved from the previous state according to: 
\begin{align*}
\left(
  \begin{array}{c}
    Y_{t_k} \\
    Z_{t_k} \\
  \end{array}
\right)
&=M_0(t_k)+M_1(t_k)\left(
  \begin{array}{c}
    Y_{t_{k-1}} \\
    Z_{t_{k-1}} \\
  \end{array}
\right)+v_{t_k},\ \ 
\end{align*}
where $v_{t_k}$ are i.i.d. $N(0,Q(t_k))$ with the covariance matrix
\begin{align*}
 Q(t_k)&=\left(
         \begin{array}{cc}
           V_y\left(\delta t,Y_{t_{k-1}},Z_{t_{k-1}}\right) & V_{yz}\left(\delta t,Y_{t_{k-1}},Z_{t_{k-1}}\right) \\
           V_{yz}\left(\delta t,Y_{t_{k-1}},Z_{t_{k-1}}\right) & V_z\left(\delta t,Y_{t_{k-1}},Z_{t_{k-1}}\right) \\
         \end{array}
       \right)
\end{align*}
and $V_y$, $V_z$ and $V_{yz}$ are as given in \eqref{Vy}, \eqref{Vz} and \eqref{Vzy} respectively. 

\par Letting $t \rightarrow \infty$ in the conditional moments given in Proposition \ref{condmoment} yields the unconditional moments of the factor process which are given by the following corollary.
\begin{cor} \label{uncondm}Unconditional mean, variance and covariance of $Y_t$ and $Z_t$ is given by:
\begin{align*}
\mu_y^0&=\theta_z,\quad \mu_z^0=\theta_z,\\
V_y^0&=\displaystyle\frac{\sigma_y^2\theta_z}{2\kappa_y}+\displaystyle\frac{\kappa_y\theta_z\sigma_z^2}{2(\kappa_z+\kappa_y)\kappa_z},\\
V_z^0&=\frac{\sigma_z^2\theta_z}{2\kappa_z},\\
V_{yz}^0&=\displaystyle\frac{\sigma_z^2\theta_z\kappa_y}{2\kappa_z(\kappa_z+\kappa_y)}=\displaystyle\frac{\kappa_y}{(\kappa_z+\kappa_y)}V_z^0.
\end{align*}
\end{cor}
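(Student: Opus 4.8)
The plan is to take the limit $t\to\infty$ in each conditional moment from Proposition~\ref{condmoment}, exactly as announced in the statement. Throughout I assume $\kappa_y>0$ and $\kappa_z>0$, so that every factor $e^{-\kappa_y t}$ and $e^{-\kappa_z t}$ decays, together with the nonvanishing of the denominators $\kappa_z-\kappa_y$, $\kappa_z-2\kappa_y$, $\kappa_z+\kappa_y$ appearing in the formulas. Under these conditions the exponentials tend to zero and only the constant pieces survive, so the computation reduces to identifying, in each expression, which term is not multiplied by a decaying exponential.

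The means and the two simpler second moments are immediate. In $E[Y_t\mid Y_0=y,Z_0=z]$ the term $e^{-\kappa_y t}y$ vanishes, the last summand equals $\tfrac{\kappa_y}{\kappa_z-\kappa_y}(e^{-\kappa_y t}-e^{-\kappa_z t})z\to 0$, and the first summand converges to $\tfrac{\theta_z}{\kappa_z-\kappa_y}(\kappa_z-\kappa_y)=\theta_z$, giving $\mu_y^0=\theta_z$; likewise $E[Z_t\mid\cdot]\to\theta_z$, so $\mu_z^0=\theta_z$. For $V_z$ in \eqref{Vz} I expand $e^{-2\kappa_z t}(e^{\kappa_z t}-1)((e^{\kappa_z t}-1)\theta_z+2z)$ and observe that the only term free of a decaying exponential is $\theta_z$, yielding $V_z^0=\sigma_z^2\theta_z/(2\kappa_z)$. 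The covariance $V_{yz}$ in \eqref{Vzy} is handled the same way: after multiplying the bracket by the prefactor $e^{-(2\kappa_z+\kappa_y)t}\sigma_z^2/(2(\kappa_z^3-\kappa_z\kappa_y^2))$, the surviving contribution is the one carrying $e^{(2\kappa_z+\kappa_y)t}$, namely $(\kappa_z-\kappa_y)\kappa_y\theta_z$; since $2(\kappa_z^3-\kappa_z\kappa_y^2)=2\kappa_z(\kappa_z-\kappa_y)(\kappa_z+\kappa_y)$, this gives $V_{yz}^0=\sigma_z^2\theta_z\kappa_y/(2\kappa_z(\kappa_z+\kappa_y))=\tfrac{\kappa_y}{\kappa_z+\kappa_y}V_z^0$.

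The main obstacle is $V_y$ in \eqref{Vy}, whose numerator is a bracket of six exponential terms multiplied by the prefactor $e^{-(5\kappa_z+7\kappa_y)t}$. The key point is that exactly one term in the bracket, the very first, carries the matching exponent $e^{(5\kappa_z+7\kappa_y)t}$; the remaining exponents ($5\kappa_z+6\kappa_y$, $3\kappa_z+7\kappa_y$, $4\kappa_z+7\kappa_y$, $4\kappa_z+6\kappa_y$ and $5\kappa_z+5\kappa_y$) are all strictly smaller, so after multiplication by the prefactor these terms decay and drop out. The surviving contribution is $(\kappa_z-2\kappa_y)(\kappa_z-\kappa_y)^2(\kappa_z(\kappa_z+\kappa_y)\sigma_y^2+\kappa_y^2\sigma_z^2)\theta_z$; dividing by the denominator $2\kappa_z(\kappa_z-2\kappa_y)(\kappa_z-\kappa_y)^2\kappa_y(\kappa_z+\kappa_y)$ cancels the common factor $(\kappa_z-2\kappa_y)(\kappa_z-\kappa_y)^2$ and, splitting the remaining fraction, leaves $V_y^0=\sigma_y^2\theta_z/(2\kappa_y)+\kappa_y\sigma_z^2\theta_z/(2\kappa_z(\kappa_z+\kappa_y))$. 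The only genuine care needed is the bookkeeping of which exponential dominates and the final algebraic cancellation; there is no analytic subtlety beyond $\kappa_y,\kappa_z>0$.

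As an independent check that bypasses the heavy algebra, one can derive the same stationary moments directly from the dynamics \eqref{Y_P}--\eqref{Z_P}. Setting the expected increments of $Y$, $Z$, $Z^2$, $YZ$ and $Y^2$ (computed via It\^o's formula, using $d\langle Y,Z\rangle_t=0$ since $W^y$ and $W^z$ are independent) equal to zero in the stationary regime produces a triangular linear system: first $E[Y]=E[Z]=\theta_z$, then $E[Z^2]=\theta_z^2+\sigma_z^2\theta_z/(2\kappa_z)$, then $(\kappa_z+\kappa_y)E[YZ]=\kappa_z\theta_z E[Y]+\kappa_y E[Z^2]$, and finally $2\kappa_y E[Y^2]=2\kappa_y E[YZ]+\sigma_y^2 E[Y]$. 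Subtracting the squares of the means reproduces $V_z^0$, $V_{yz}^0$ and $V_y^0$ exactly, confirming the limits obtained above.
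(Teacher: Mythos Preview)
Your proof is correct and follows exactly the approach indicated in the paper, namely letting $t\to\infty$ in the conditional moments of Proposition~\ref{condmoment} and identifying the surviving non-decaying term in each expression. The additional stationary-moment check via It\^o's formula is a nice independent confirmation that the paper does not provide, but the core argument matches the paper's.
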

Denoting the transpose of a matrix $A$ by $A^{\top}$, we use the notation 
$\left(
    Y_{t_m|t_{n}},
    Z_{t_m|t_{n}} 
 \right)^{\top}$ to represent the estimate of the state of the factor process at $t_m$ given observations up to, and including time $t_n$. In the same vein the estimate for the error covariance matrix, which serves as a measure for the precision of the state estimate, will be denoted by $P_{t_m|t_n}$. In what follows we shed light on the filtering algorithm. 
 
Given the parameter set $\varphi=(\kappa_z,\kappa_y, \theta_z,\lambda^z,\lambda^y, \sigma_z,\sigma_y,a_0,\gamma, b_0,c_0, H)$, the Kalman filter consists of prediction and updating steps which are applied for each time point in the data sample in the following way:
 
 \begin{itemize}
 \item[1.]{\it Initialize} the filter by using the unconditional moments:
 \begin{align*}
 \left(\!
  \begin{array}{c}
    Y_{0|0} \\
    Z_{0|0} 
  \end{array}
\!\right)&=\!\left(\!
  \begin{array}{c}
    \theta_z \\
    \theta_z 
  \end{array}
\!\right),\quad  P_{0|0}=\!\left(\!\begin{array}{cc}
                       \displaystyle \frac{\sigma_y^2\theta_z}{2\kappa_y}\!+\!\frac{\kappa_y\theta_z\sigma_z^2}{2(\!\kappa_z\!+\!\kappa_y\!)\kappa_z}&\displaystyle \frac{\sigma_z^2\theta_z\kappa_y}{2\kappa_z(\!\kappa_z\!+\!\kappa_y\!)}\\
                        \displaystyle\frac{\sigma_z^2\theta_z\kappa_y}{2\kappa_z(\!\kappa_z+\kappa_y\!)} & \displaystyle\frac{\sigma_z^2\theta_z}{2\kappa_z}
                      \end{array}\right)
\end{align*}
\item[2.]{\it Prediction:} produce an estimate of the current state and covariance of the estimate via
\end{itemize}
 \begin{gather*}
 \left(\!
  \begin{array}{c}
    Y_{t_k|t_{k-1}} \\
    Z_{t_k|t_{k-1}} \\
  \end{array}\!
\right)=\!M_0(t_k)\!+\!M_1(t_k)\!\left(\!
  \begin{array}{c}
    Y_{t_{k-1}|t_{k-1}} \\
    Z_{t_{k-1}|t_{k-1}} \\
  \end{array}
\!\right)
\, \text{and}\, P_{t_k|t_{k-1}}\!=\! M_1(t_k)P_{t_{k-1}|t_{k-1}}M_1(t_k)^{\top}\!\!+\!Q(t_k),
 \end{gather*}
 respectively.
\begin{itemize}
 \item[3.]{\it Updating:} for all $i$, $i=1,\cdots,I$, and $j$, $j=1,\cdots,J$, compute
 \begin{align*}
 R(t_k|t_{k-1},\tau_i,j)&=\!C_z(\tau_i,j)\!+\!\displaystyle\frac{1}{\tau_i}\left(\!B_y(\tau_i,x_j)Y_{t_k|t_{k-1}}\!+\! B_z(\tau_i,x_j)Z_{t_k|t_{k-1}}\!\right),
\end{align*}
and form the corresponding $IxJ$-dimensional row vector $R_{t_k|t_{k-1}}$. Next, compute $R_{t_k}$ and the \emph{innovation vector} successively as
\begin{align*}e_{t_k}&=R_{t_k}-R_{t_k|t_{k-1}}.\end{align*}
Then, generate an $(IxJ)\times 2$ matrix, $B$, having $\displaystyle\frac{B_y(\tau_i,x_j)}{\tau_i}$ and $\displaystyle\frac{B_z(\tau_i,x_j)}{\tau_i}$ in the $\left(l(i,j),1\right)$st and $\left(l(i,j),2\right)$nd entries, respectively. Now, compute the \emph{innovation covariance} matrix, $F_{t_k}$ via
 \[F_{t_k}=BP_{t_k|t_{k-1}}B^{\top}+H. \]
After this, calculate the \emph{Kalman gain} with the following: 
\begin{align*}
K_{t_k}&=P_{t_k|t_{k-1}}B^{\top}F_{t_k}^{-1}.
\end{align*}
Lastly, update the state vector and the estimate covariance by 
\begin{gather*}
 \left(
  \begin{array}{c}
    Y_{t_k|t_{k}} \\
    Z_{t_k|t_{k}} \\
  \end{array}
\right)=\left(
  \begin{array}{c}
    Y_{t_k|t_{k-1}} \\
    Z_{t_k|t_{k-1}} \\
  \end{array}
\right)+K_{t_k}e_{t_k} \quad \text{and}\quad
 P_{t_k|t_k}=P_{t_k|t_{k-1}}-K_{t_k}BP_{t_k|t_{k-1}},
\end{gather*}
respectively.
\end{itemize}
The Kalman filter provides the following likelihood function:
\begin{equation}\nonumber
\log L(R_1,R_2,...,R_N;\varphi)=-\frac{K}{2}\log2\pi-\frac{1}{2}\sum_{k=1}^K\log|F_{t_k}|-\frac{1}{2}\sum_{k=1}^Ke_{t_k}^{\top}F_{t_k}^{-1}e_{t_k}.
\end{equation}
Notice that $L$ is a function of $e_t$ and $F_t$ which, eventually, depend on the parameter set $\varphi$. Thus, as the last step of the QML method, we choose $\varphi$ in such a way that the likelihood function is maximized. Finally, we want to point out that the observed data vectors may change size over the sample period. This is due to the unavailability of the data for some tranches and/or maturities in some days of the sample period. To overcome this problem, we adjust the Kalman filter algorithm in such a way that it takes the changes in the size of the data into account.
\section{Data Implementation}\label{daI}
In this section one and two-factor models described above are implemented on the real market data. Moreover, we perform an in-sample hedging analysis. We also run a simulation where normal and extreme loss scenarios are generated via method of importance sampling. Finally we assess the hedging performance
under these more general scenarios.

\subsection{Data Description}
The raw data comprises daily observations of iTraxx Europe from 30 August 2006 to 3 August 2010. The stripped data, which has been sourced from Bank Austria\footnote[1]{We thank Peter Schaller for providing the data.}, is the zero-coupon spreads, $R(t_k,\tau_i,j)$, for four different time to maturities $(\tau_1,\cdots,\tau_4)=(3,5,7,10)$ and six tranches $j=1,...,6$ with standard attachment and detachment points  $0\%$, $3\%$, $6\%$, $9\%$, $12\%$, $22\%$, $100\%$. This corresponds to $K=972$ observation days in each of which we have a $4\times 6$ observation matrix. 

We illustrate the time series of 5-year zero-coupon spreads and the index spreads across four different maturities in Figure~\ref{badata} and Figure~\ref{indexdata}, respectively. Naturally, market conditions are reflected in the data set. The index and tranche data follow relatively stable pattern from the beginning of the data period to July 2007, where the  credit crunch erupted. In March 2008, we observe a spike in the spread data which stems from the panic due to the possibility of the collapse of Bear Stearns. Furthermore, a drastic upward movement is observed starting from September 2008. This time period corresponds to the breakdown of the credit market due to events such as the bankruptcy of Lehman Brothers. Moreover, Figure~\ref{badata} and Figure~\ref{indexdata} together show that the tranche data and the index data have the same up and downward trends during the time period considered. One other feature of the data set we use is that there is no default event during the sample period. 

\begin{figure}[H]
 \centering
 \subfloat[iTraxx Europe 5-year zero-coupon spread data for all tranches.]{\label{badata}\includegraphics[height=5.5cm,width=7.5cm]{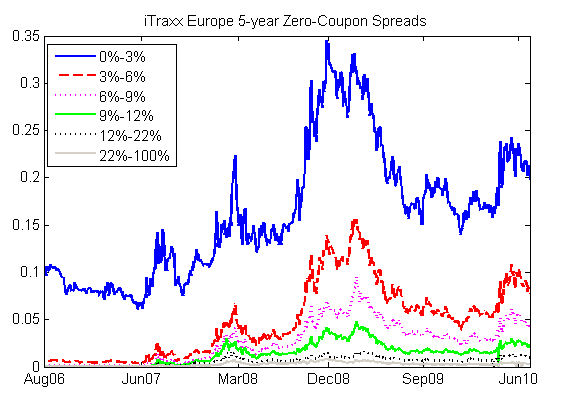}}
  \subfloat[iTraxx Europe index spread data for all maturities.]{\label{indexdata}\includegraphics[height=5.5cm,width=7.5cm]{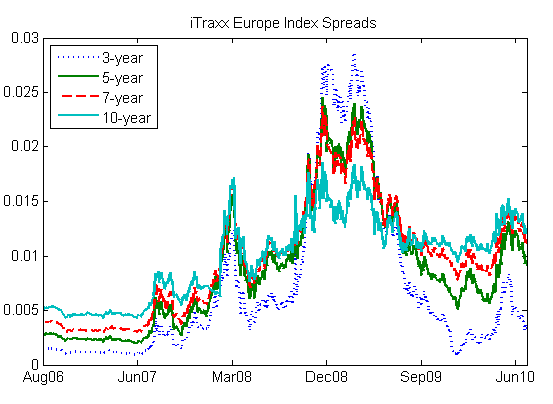}}
 %\caption{  Filtered series for the factors $Y$ and $Z$ from 30 Aug 2006 to 3 Aug 2010.}
 \caption{ iTraxx Europe data from 30 August 2006 to 3 August 2010}
  \end{figure}

\subsection{Estimation Results}
Running the estimation algorithm given in Section~\ref{EM} we fit the one and two-factor models to the iTraxx data. During the analysis the risk-free rate is considered to be constant at $r=0.05$. In the following we  discuss the results of the empirical analysis.

As mentioned earlier, the QML approach makes it possible to estimate the model parameters and filter out the unobservable factor series simultaneously. We obtain the parameter estimates given in Table \ref{prm}. The likelihood values given in the table suggest that the corresponding likelihood ratio test statistic (LRT)\footnote{LRT is given as the $2(LL_2-LL_1)$ where $LL_2$ and $LL_1$ denote the value of the log-likelihood for the nested one-factor model and the two-factor model, respectively.} is highly significant and rejects the null hypothesis that the one-factor model is equivalent to the two-factor model. 
%{\color{red}comment on negative market price of risk: $\mathbb{Q}$-stationarity of $Y$ and $Z$ is not an issue; in general negetaive market price of risk is an indication of an over-valued asset} %is equal to $36954$. This

\begin{table}[htbp]
\centering
\begin{tabular}{|r|cc|cc|}
     \hline
        & \multicolumn{2}{c|}{1-factor} & \multicolumn{2}{c|}{2-factor}\\
   Parameter  & Estimate & SE & Estimate & SE \\
     \hline
    $\theta_z$&0.03& 5.7146 &0.0055&0.0002\\
        $\kappa_y$& -&-&0.5223 &0.0638\\
				$\kappa_z$&6.96e-05& 0.0130&0.2209& 0.0214\\
        $\sigma_y$& -& -&0.3806&0.0163\\
				$\sigma_z$& 0.15&0.0205 &0.2588&0.0202\\
        $\lambda_y$& -&-&-0.6671&0.0647\\
        $\lambda_z$& 1.44e-04& 0.0135&-0.3096&0.0235\\
        $a_0$&3.23e-05&1.2875e-005 &98.7846&3.6634\\
        $\gamma$&26.08&6.8044e+002 &0.3279&0.0130\\
        $b_0$&23.96& 0.2755 &26.4061&0.2371\\
        $c_0$& -&-&0.0911&0.0096\\
                       \hline
Log-likelihood&6.9365e+004&  &8.7842e+004&   \\     
\hline                    
                           \end{tabular}
													\caption{Parameter estimates and corresponding standard errors (SE) for the one-factor and two-factor affine models.}\label{prm} 
\end{table}
The filtered factor series are depicted in Figure ~\ref{Y_Z}. It is remarkable how the factor $Z$,  which drives the catastrophic component, stays almost zero until the breakout of the credit crisis. 
\begin{figure}[H]
\centering
 \includegraphics[height=5.5cm,width=8cm]{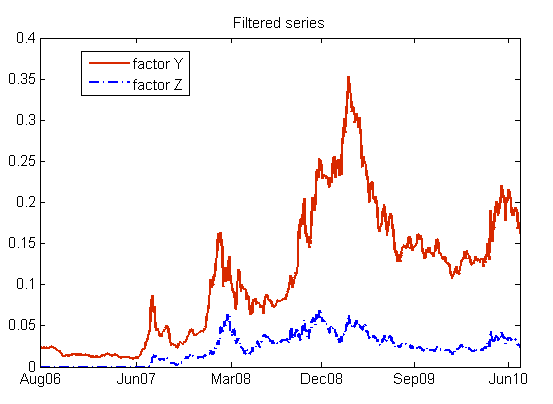}
\caption{  Filtered series of the factor $Y$ and $Z$ for the period 30 Aug 2006-3 Aug 2010.}
 \label{Y_Z}
\end{figure}

Taking $3-6\%$, $9-12\%$ and $22-100\%$ tranches as representatives for equity, mezzanine and senior tranches, respectively, we plot actual vs estimated zero-coupon spreads of 5-year maturity in Figure~\ref{ACvsEs1}. We observe that the two-factor model yields a plausible fit across all tranches, outperforming the one-factor model. Also, it is remarkable how the one-factor model estimates are far below the actual data whereas the two-factor model provides almost a perfect fit for the senior tranche. Here, we want to point out that a two-factor affine factor model with the restriction of a zero catastrophic component, as the one-factor-model is not able to fit the super-senior tranche. There, the importance of the catastrophic risk component of the two-factor model comes into play. That is, under the two-factor affine framework including the catastrophic component becomes inevitable for a model fit in senior tranches. 

\begin{figure}[H]
 \centering
 \subfloat{\label{f_2}\includegraphics[height=5cm,width=7.5cm]{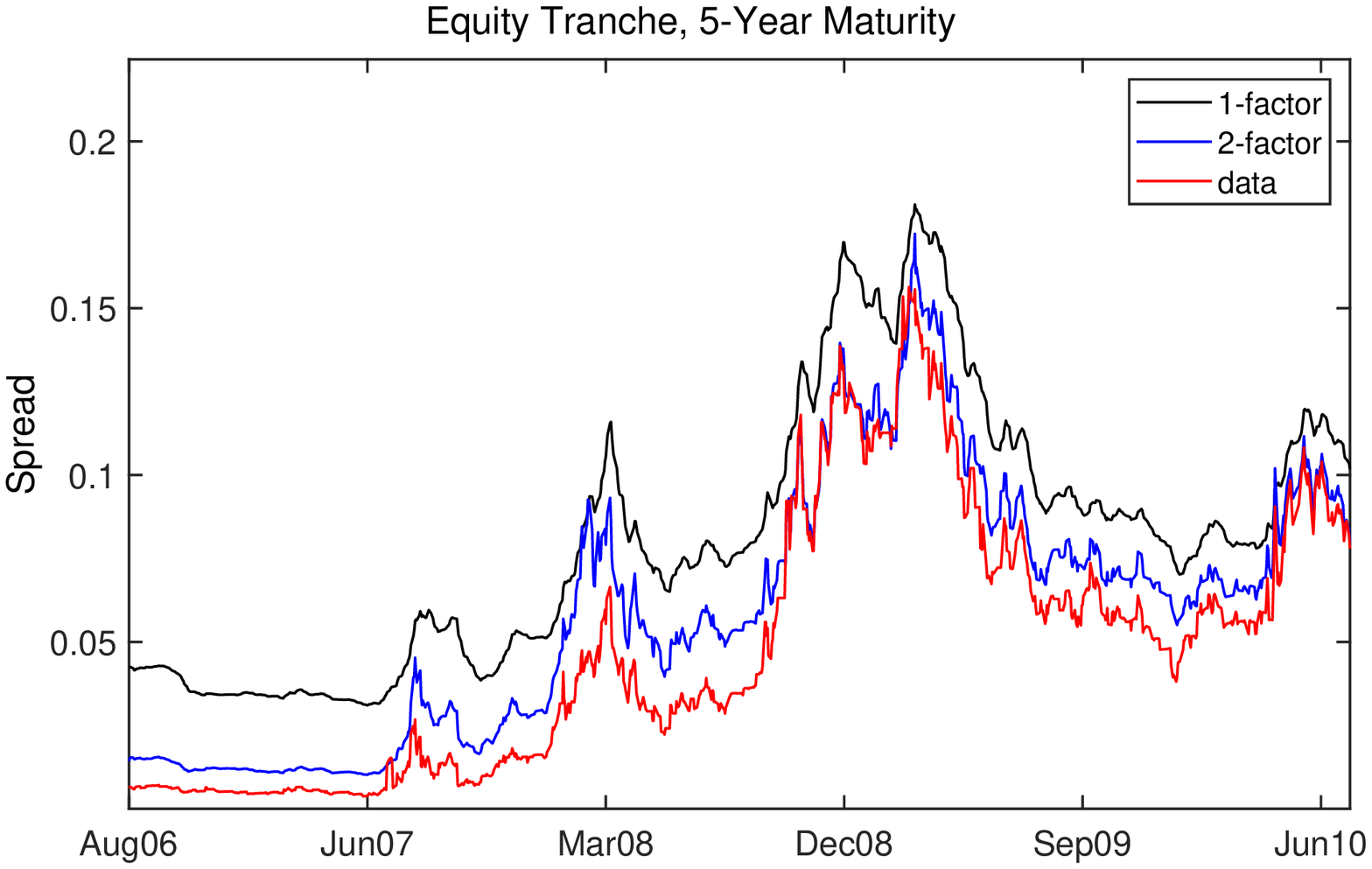}}
 \subfloat{\label{f_4}\includegraphics[height=5cm,width=7.5cm]{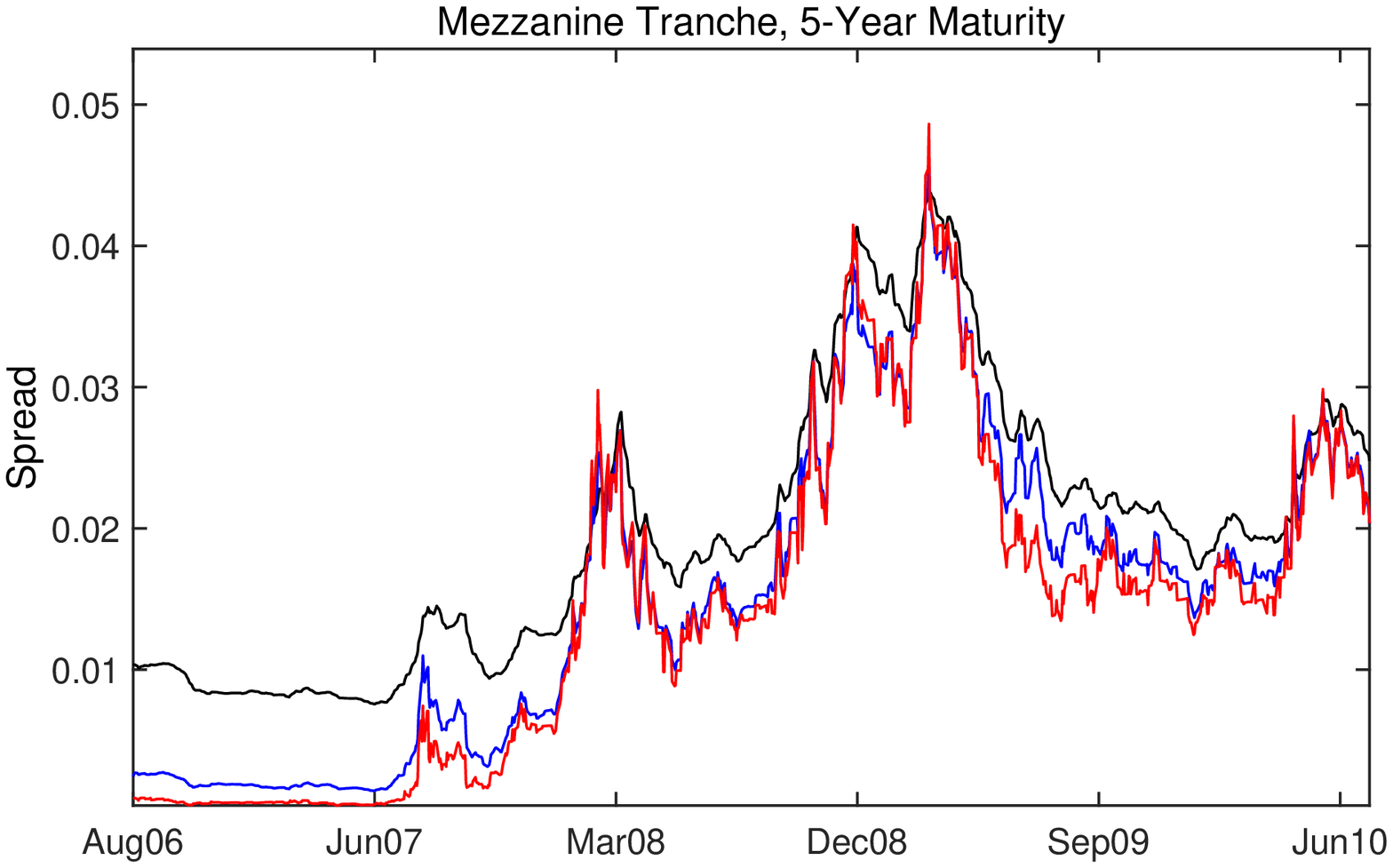}}\\
\subfloat{\label{f_6}\includegraphics[height=5cm,width=7.5cm]{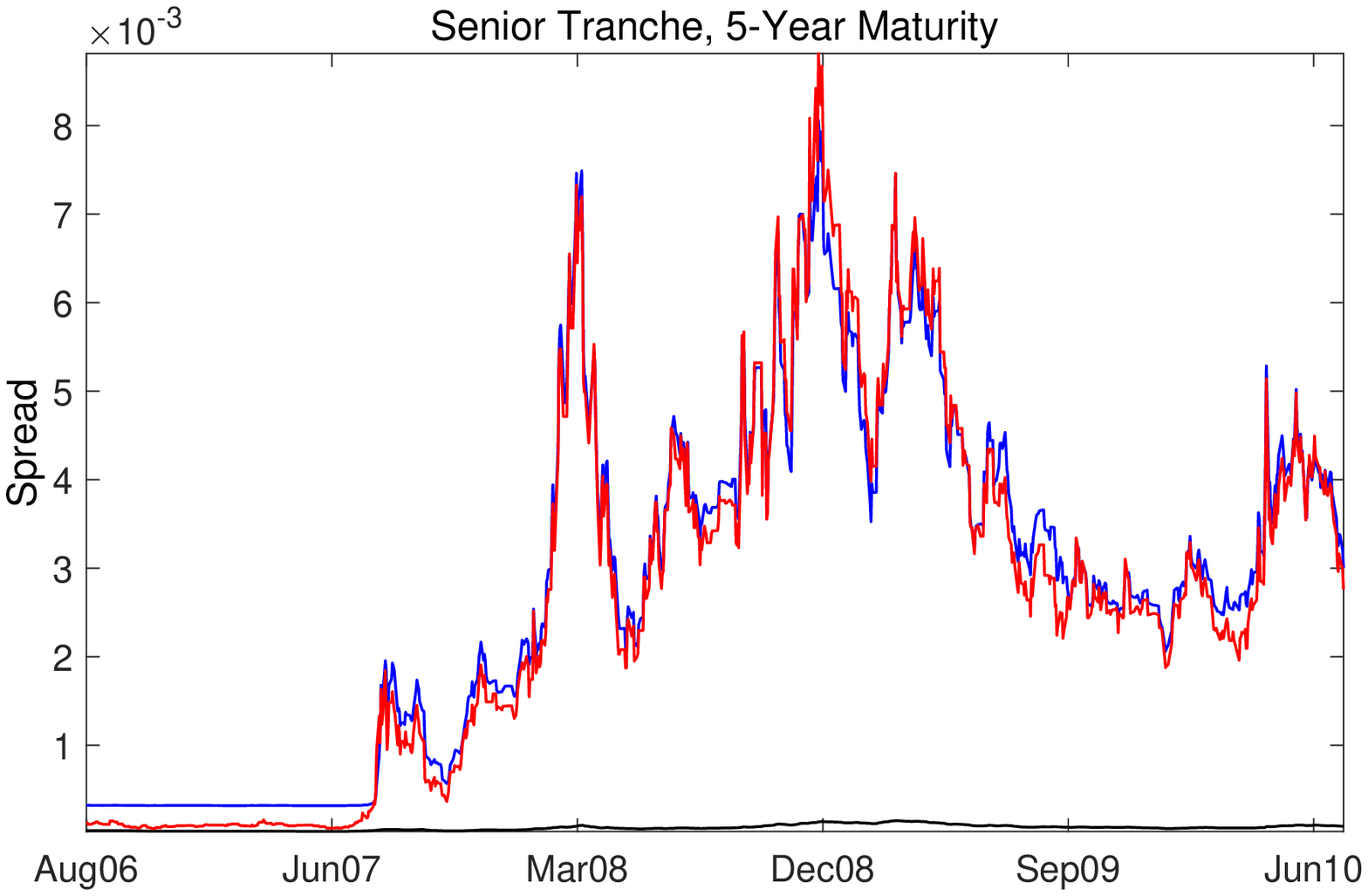}}
\caption{ Actual vs Estimated Spread Series}
\label{ACvsEs1}
\end{figure}

Given the parameter estimates and filtered factor series we simulate a loss trajectory. Inserting the parameter estimates, filtered factor series and the value of the loss trajectory into formula \eqref{ELG}, we obtain the implied expected loss given default series. Moreover, to investigate the effect of the catastrophic component, we fix the catastrophic risk parameter at $c_0=0$ and compute the corresponding expected loss given default series. Figure~\ref{ELGD} demonstrates how the implied expected loss given default with and without catastrophic component change in the sample period. Following from the fact that  the factor $Z$ stays very close to zero  by August 2007, the series with and without catastrophic component coincide in the corresponding part of the sample period. This stipulates that the catastrophic risk component is needed during the crisis times.  Hence, we conclude that considering a non-zero catastrophic component provides more flexibility in the modeling of expected loss given default in distressed markets.

\begin{figure}[H]
\centering
\includegraphics[height=5cm,width=8cm]{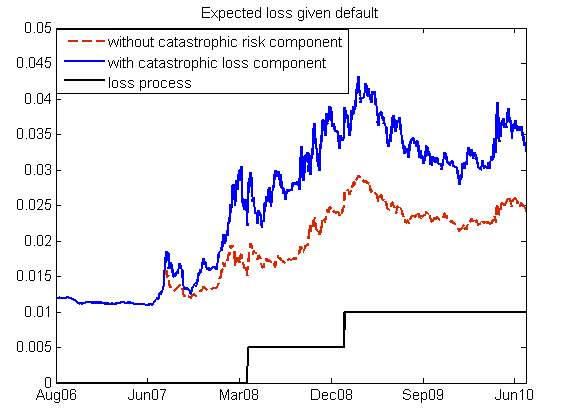}
\caption{Expected loss given default with and without catastrophic component}
  \label{ELGD}
\end{figure}

%\subsection{Hedging Algorithm}\label{hedge}
\subsection{In-Sample Hedging Results}
We first specify a hedging time horizon $[0,T]$, and trading days $0=t_0<t_1<\cdots t_K=T$ with $\delta t:=t_k-t_{k-1}\equiv 1/250$.

We denote by $CP=\{T_1,\dots,T_n\}$ the set of coupon payment dates and assume that any coupon payment date satisfies $T_i \in \{t_1,\dots,t_K\}$ for all $i$.

Assume a time series of the loss process $L_{t_k}$, and the spot value processes $\Gamma^{(x_1,x_2]}_{t_k}$ and $\Gamma^{(0,1]}_{t_k}$, $k=0,\dots,K$, are given. Consider a hedging strategy denoted by $\phi_{t_k}$. The resulting nominal value process $V_{t_k}$ of the self-financing hedging portfolio with zero initial capital, $V_{t_0}=0$, is given by the recursive formula
\begin{align*} V_{t_k}&= V_{t_{k-1}}e^{r\delta t}+PL_{t_k}^{(0,1]} \end{align*}
where  
\begin{equation}\nonumber
PL_{t_k}^{(0,1]}\!=\!\phi_{t_{k-1}}\left(\!\Gamma_{t_k}^{(0,1]}\!-\!\Gamma_{t_{k-1}}^{(0,1]} e^{r\delta t_k}\!+\!1_{\{t_{k}\in CP\}} \kappa_{t_0}^{(0,1]} H^{(0,1]}(L_{t_{k}})\!-\!\left(\!H^{(0,1]}(L_{t_{k-1}})-H^{(0,1]}(L_{t_{k}})\!\right) \!\right)
\end{equation}
indicates the nominal daily profit and loss on $(t_{k-1},t_k]$. The discounted gains process $G^{(0,1]}_{t_k}$ of holding the index is then given by $G^{(0,1]}_{t_k}=e^{-r t_k} V^{(0,1]}_{t_k}$ where $V^{(0,1]}_{t_k}$ denotes the nominal value process for the strategy $\phi_{t_k}\equiv 1$. And following the same way one can obtain $G^{(x_1,x_2]}_{t_k}$.

The hedging algorithm consists of computation and comparison of the nominal value process of the hedging portfolio and the gains process of the tranche position for each day of the hedging period. For the considered sample period,  we implement the hedging algorithm for all attachment and detachment points. Taking $3-6\%$, $9-12\%$ and $22-100\%$ tranches as representatives for equity, mezzanine and senior tranches, respectively, Figure~\ref{TR1} depicts the hedging strategies, and the series for the gains process and nominal spot value of the STCDOs as well as the hedging portfolio values. The change in $\phi$ at the beginning of the crisis around July 2007 is observed to exhibit different patterns for each tranche: while for equity tranches $\phi$ is decreasing in absolute value, indicating a reduction in insurance demand, for the senior tranche there is the opposite behavior. We attribute this change to the changing correlation structure between tranches and the index: running correlation series indicate an upward jump towards $1$ for the senior tranches and there is the opposite behavior concerning the equity tranche. The running correlation series stays relatively stable for the mezzanine tranches. When we focus on the corresponding hedging portfolio value, hedging strategy performs relatively better for the mezzanine tranches. Moreover, we observe a poor hedging performance for the equity tranches, in particular during the crisis period.  
\begin{figure}[H]
 \centering
 \subfloat[Equity tranche]{\label{tr2_1}\includegraphics[height=4.5cm,width=6cm]{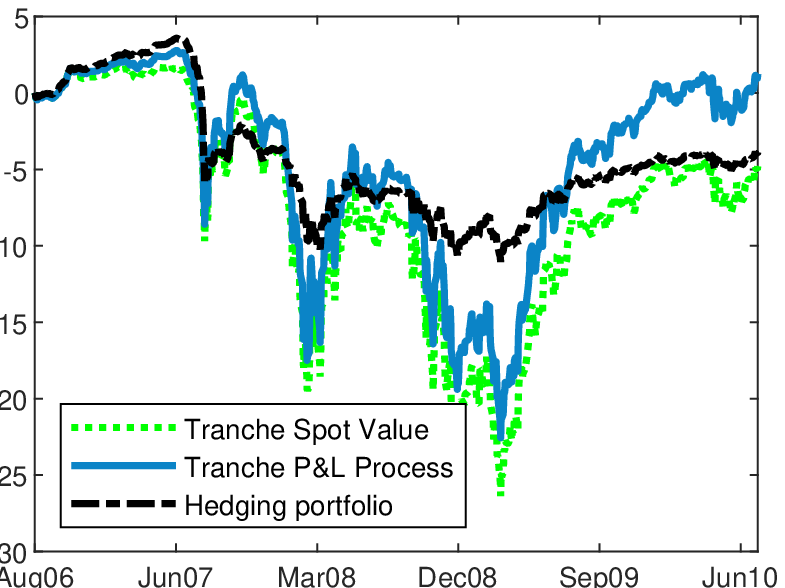}
\includegraphics[height=4.5cm,width=6cm]{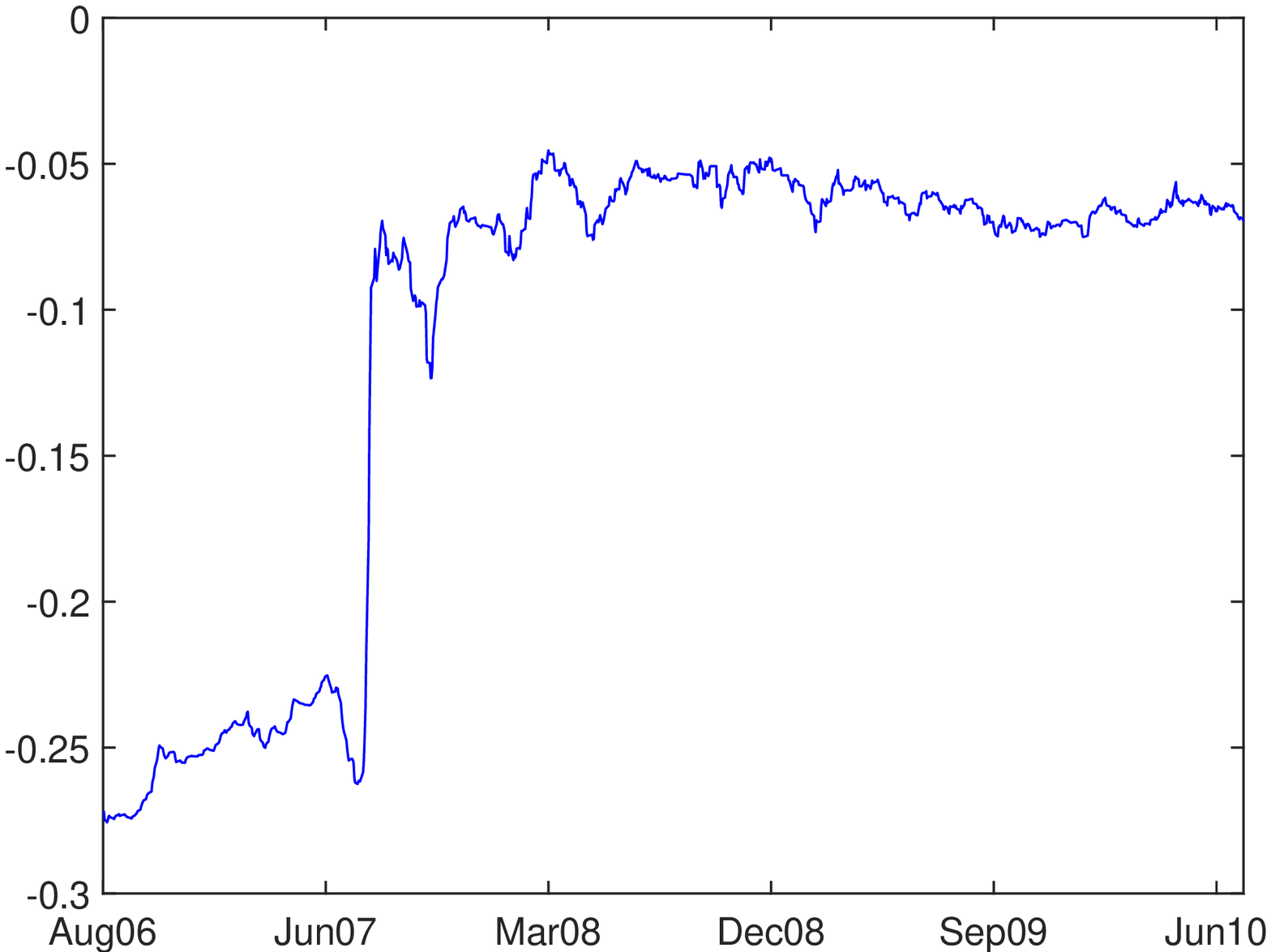}} \\
  \subfloat[Mezzanine tranche]{\label{tr4_1}\includegraphics[height=4.5cm,width=6cm]{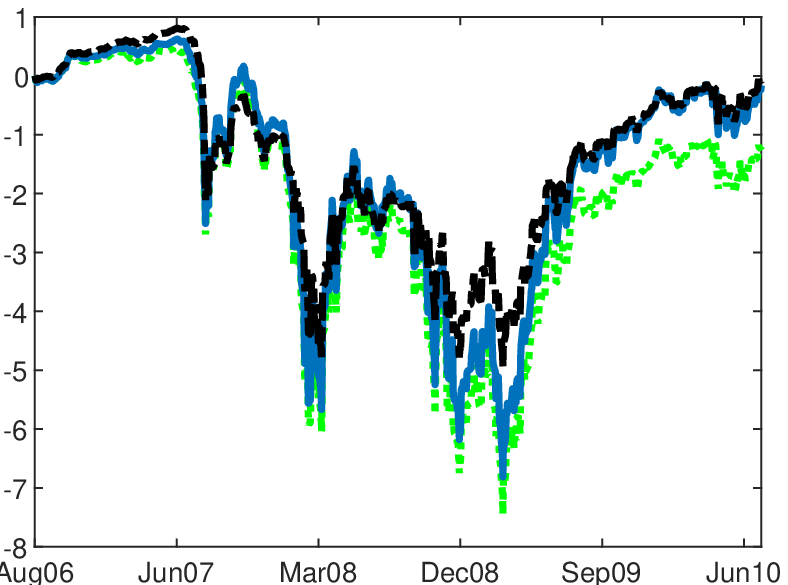} \includegraphics[height=4.5cm,width=6cm]{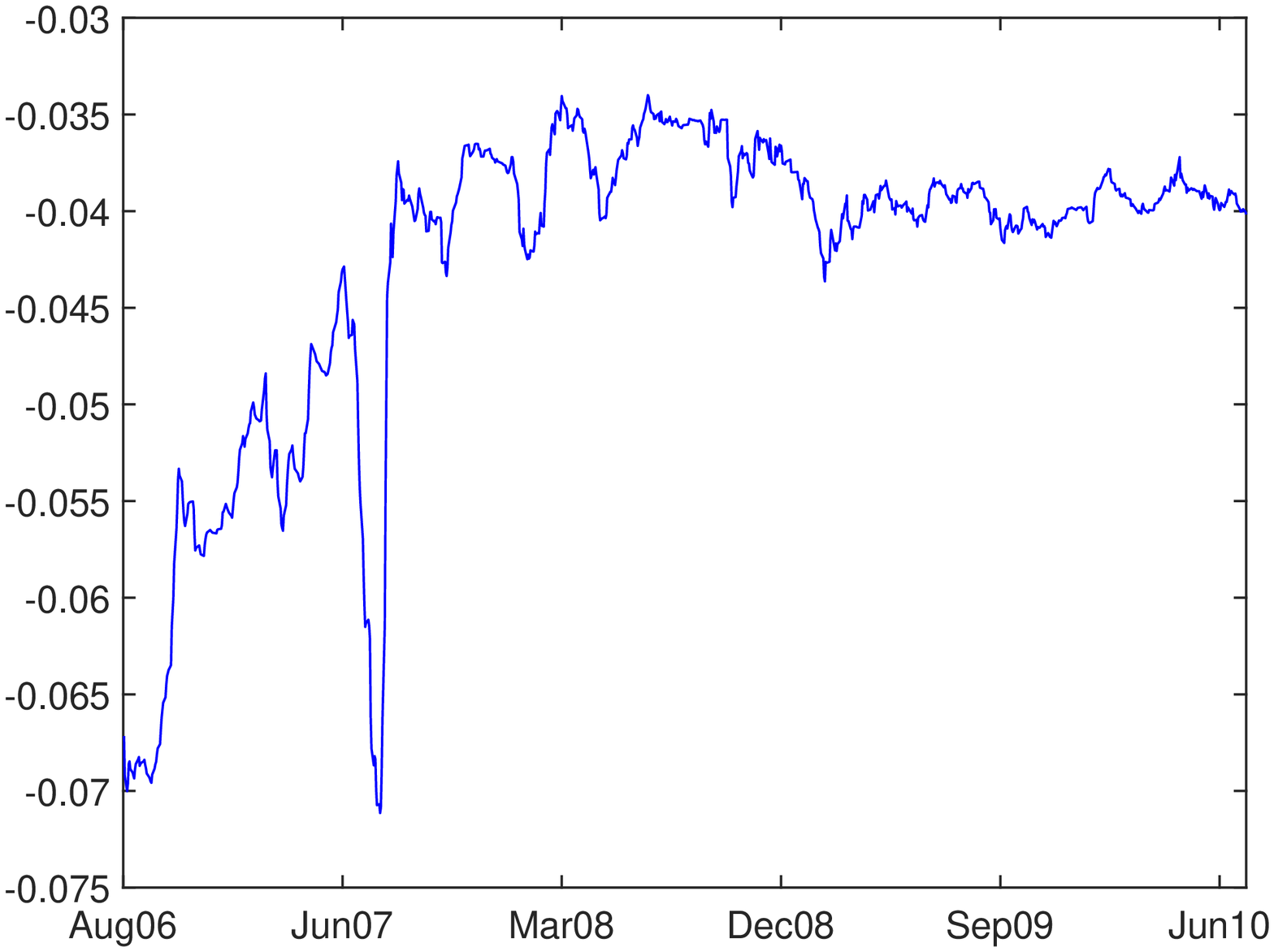}}\\
 \subfloat[Senior tranche]{\label{tr6_1}\includegraphics[height=4.5cm,width=6cm]{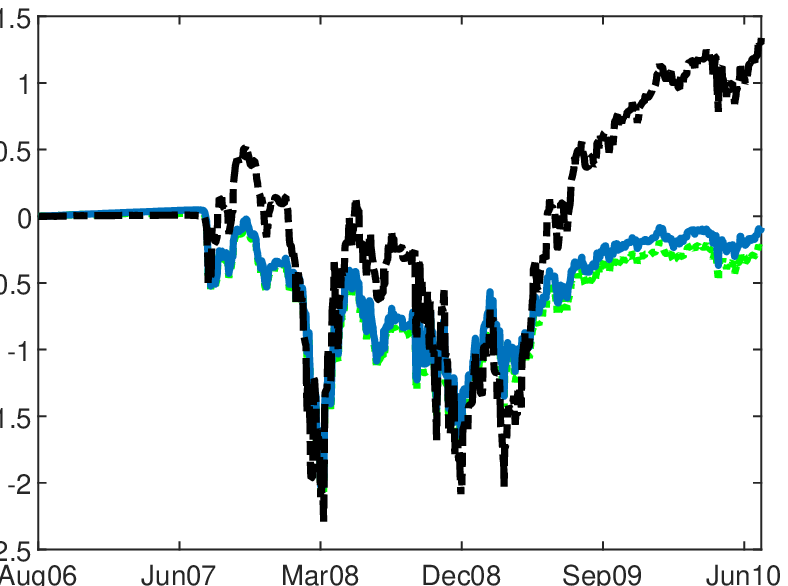} \includegraphics[height=4.5cm,width=6cm]{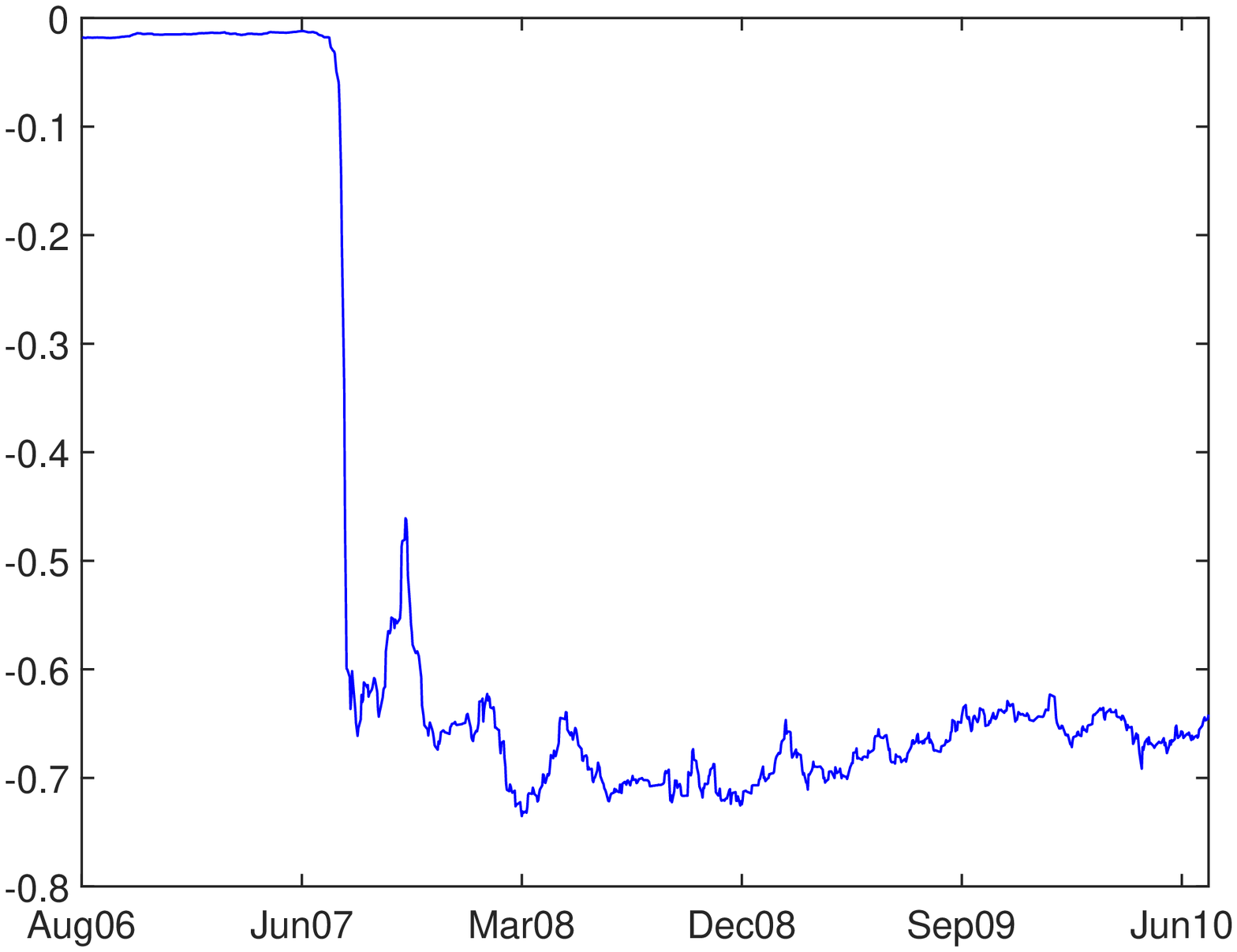}} 
 \caption{Hedging results for the sample period: gains process, nominal spot value and hedging portfolio value for equity, mezzanine and senior tranches (left); the hedging strategy $\phi$ (right).}
\label{TR1}
 \end{figure}

In order to better assess the performance of a given hedging strategy, \citet{CK} suggest  \emph{reduction in volatility} criteria which measures the reduction in the dispersion of the profit distribution with respect to the unhedged position. Formally, it is defined as the ratio of the volatility of daily profits from the hedged position to the volatility of the daily profits from the unhedged tranche position. According to this criteria, a hedging strategy performs better as long as the related reduction in volatility value is smaller. We report the reduction in the volatility of P\&L in Figure~\ref{RV}. According to this criteria, the hedging strategy performs better for the mezzanine tranches.
\begin{figure}[H]
 \centering
 \includegraphics[height=6cm,width=8cm]{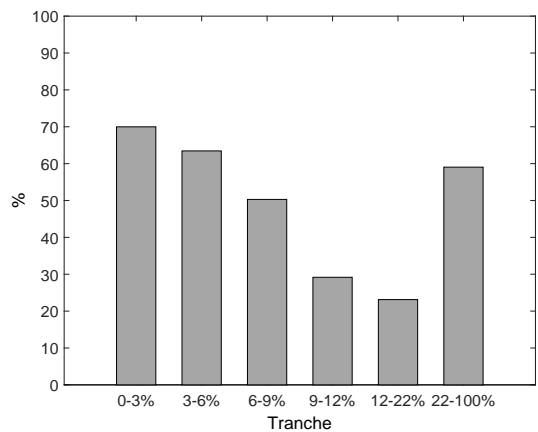}
\caption{Reduction in volatility }
\label{RV}
\end{figure}

\subsection{Simulation Study}\label{simul}
In the simulation analysis our objective is to elaborate on the performance of the variance minimizing hedging strategy in a more general framework, where scenarios with nonzero losses are permitted. Recall that in the current modeling setup we deal with three stochastic processes, namely the factors $Y$, $Z$ and the loss process $L$. We use Euler discretization to approximate the discrete time evolution of the factors Y and Z in Equation \eqref{Y_P}-\eqref{Z_P} on the equidistant time grid $0=t_0<t_1<...<t_K=T$ with $t_{k+1}-t_k\equiv \delta t=1/250$.

As the next step, to simulate the loss process $L$ we use the simulated factors $Y$, $Z$, the set of parameter estimates and an additional parameter $\Psi$ which we interpret as the \emph{importance sampling} parameter. The reason why we need the parameter $\Psi$ is as follows: there does not occur any default during the sample period we consider. Hence, the parameter set coming from the in-sample analysis is not able to generate a remarkable number of jumps. Moreover, the Monte Carlo simulation is known to fail in generating rare events unless the number of simulated scenarios is very large. Nevertheless, a frequently used technique in stress scenario generation is importance sampling (see, e.g. \citet[Section 2.7]{boyle} and references therein). In this context, it is possible to manipulate the number of jumps via amplifying the jump intensity given in \eqref{arrintense} through the parameter $\Psi$. However, one should take care of the necessary measure change for the adjustment of probabilities assigned to each scenario. This is necessary in particular for computing the empirical distribution function of the losses.  

We define the cumulative intensity process $\bar{\Lambda}_t=\int_0^t\Lambda_s ds$ and sketch the algorithm for simulating a loss trajectory of length $K$ as follows.
\begin{Algorithm}[Loss process simulation]

\begin{enumerate}
\item Initiate the jump time $\tau=0$, the number of jumps $N=0$, the loss process $L_{t_0}=0$, and the cumulative arrival intensity $\bar{\Lambda}_{t_0}=0$.
\item Generate a number $U$ from exponential distribution with parameter $1$.
\item While $k<K$ and $\bar{\Lambda}_{t_k}-\bar{\Lambda}_{\tau}<U$ calculate $\bar{\Lambda}_{t_{k+1}}$ via
\begin{align}
\nonumber \bar{\Lambda}_{t_{k+1}}&=\bar{\Lambda}_{t_k}+\Psi(\alpha(L_{\tau})+\beta_y(L_{\tau})Y_{t_k}+\beta_z(L_{\tau})Z_{t_k}-r)\delta t
\end{align}
\begin{align}
\nonumber \text{set} \   \  L_{t_{k+1}}=L_{t_k} , \ \  k \mapsto k+1
\end{align}
\item If $\bar{\Lambda}_{t_k}-\bar{\Lambda}_{\tau} \geq U$,  i.e., when a jump occurs generate a number $s$ from the standard uniform distribution. Compute the jump size via
\[ \Delta L_{t_k}=F_L^{-1}(L_{t_k},Y_{t_k},Z_{t_k},s)\]
where $F_L$ is the cumulative loss given default distribution given by
\begin{align}
\begin{split}
\nonumber F_L(L_{t_k},Y_{t_k},Z_{t_k},x)&=\frac{\alpha(L_{t_k})+\beta_y(L_{t_k})Y_{t_k}+\beta_z(L_{t_k})Z_{t_k}-\alpha(L_{t_k}+x)}{\alpha(L_{t_k})+\beta_y(L_{t_k})Y_{t_k}+\beta_z(L_{t_k})Z_{t_k}-r}\\
\nonumber \\
&-\frac{\beta_y(L_{t_k}+x)Y_{t_k}+\beta_z(L_{t_k}+x)Z_{t_k}}{\alpha(L_{t_k})+\beta_y(L_{t_k})Y_{t_k}+\beta_z(L_{t_k})Z_{t_k}-r}
\end{split}
\end{align}
Update the loss path, jump time and number of jumps
\[L_{t_k}=L_{t_k}+\Delta L_{t_k}, \ \ \tau=t_k, \ \  N=N+1\]
\item If $k<K$ return to 2, else stop.
\end{enumerate}
\end{Algorithm}
Employing the methodology described above, we simulate $2000$ scenarios, $1000$ of which are the normal scenarios and generated via taking importance sampling parameter $\Psi=1$. On the other hand, in order to simulate $1000$ stress scenarios we set  $\Psi=100$. We also set the probability estimate of each of the $1000$ normal scenarios equal to $q(i)=1/1000$, $i=1,2,...,1000$, while for the stress scenarios we adjust the probability estimate of each scenario in the following way. 

Denote by $\tau_n$ the $n^{th}$ jump time of the process $L$ . Changing the jump intensity of the process $L$ from $\Lambda_t$ to $\Psi \Lambda_t$ and leaving the jump size distribution unchanged is tantamount to an equivalent change of measure where the measure $\mathbb{P}^{\Psi}\sim \mathbb{P}$ is characterized by
\[\frac{d\mathbb{P}^{\Psi}}{d\mathbb{P}}|\mathcal{F}_t=M_t\] with the Radon-Nikodym derivative $M_t$ given by (see, e.g. \citet[Chapter VIII, T10]{bremaud})
\begin{align}\label{RNden}
M_t&=\left(\prod_{n\geq 1}\Psi 1_{\{\tau_n \le t\}}\right)e^{\int_0^t(1-\Psi)\Lambda_s ds}
= \left(\prod_{n\geq 1}\Psi 1_{\{\tau_n \le t\}}\right)e^{\bar{\Lambda}_t(1-\Psi)}.
\end{align}
Now suppose we generate the $i^{th}$ stress scenario from the distribution $\mathbb{P}^{\Psi}$. For this particular scenario we denote the total number of jumps realized in $[0,t_K]$ by $N^i$. Then, according to \eqref{RNden} we define the corresponding weight of the scenario $i$ under $\mathbb{P}$ by
\[w(i)=\frac{e^{(\Psi-1)\sum_{k=0}^{K-1}(\alpha(L_{t_k})+\beta_y(L_{t_k})Y_{t_k}+\beta_z(L_{t_k})Z_{t_k}-r)\delta t}}{\Psi^{N_i}}.\]
Due to the law of large numbers we have $\frac{\sum_{i=1}^{1000}w(i)}{1000}\approx 1$. Nevertheless, we normalize the weight $w(i)$ in an exact way and obtain the following estimate for the $\mathbb{P}$-probability of scenario $i$
\[p(i)=\frac{w(i)}{\sum_{i=1}^{1000}w(i)}.\]
Finally, to aggregate the $2000$ scenarios we give equal weight to normal and stress scenarios, and set probabilities $\bar{q}(i)=q(i)/2$ and $\bar{p}(i)=p(i)/2$, so that $\sum_{i=1}^{1000}\bar{q}(i)+\bar{p}(i)=1$, as it should be.
 Given the parameter estimates and the filtered factor series $(Y_{t_k}, Z_{t_k})$ we utilize formula \eqref{gamma} to construct series for the spot value of $5$-year index and $5$-year STCDO tranches. Then, we study the hedging of STCDOs with the index. Moreover, we perform a conditional simulation analysis in which $2000$ loss scenarios are generated conditional on the original  filtered factor trajectories. 
The results are given below.

 We take the set of simulated scenarios and focus on the final date, $T=250$, of the simulation period. The empirical cumulative loss distribution function (under $\mathbb{P}$)  at $T$ is given in Figure~\ref{CDF_L}. According to the figure, the simulation procedure is successful in the sense that it is able to produce loss scenarios ranging between $0\%$ and $10\%$.
\begin{figure}[H]
\centering
  \includegraphics[height=6cm,width=7cm]{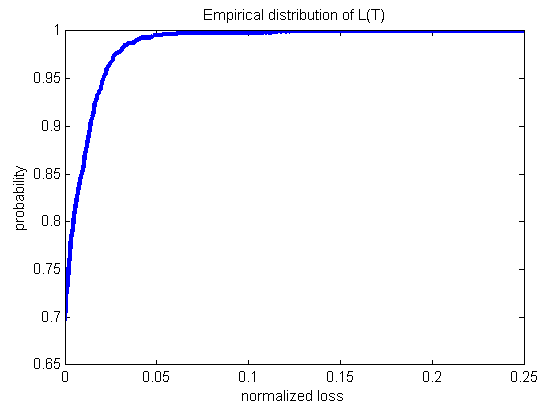}
  \caption{Empirical cumulative loss distribution at $T=250$ for a total number of 2000 scenarios.}
	\label{CDF_L}
\end{figure}
In Figure~\ref{PL_CDF} we depict the date $T$ empirical cumulative distribution function of the total hedging portfolio P\&L. This figure implies that for mezzanine and senior tranches, variance-minimizing hedging strategies yield normalized total portfolio P\&L values which are close to zero in most of the simulated trajectories. In other words, on average the hedging strategy performs well for the mezzanine and senior tranches.
\begin{figure}[H]
 \centering
 \includegraphics[height=6cm,width=8.5cm]{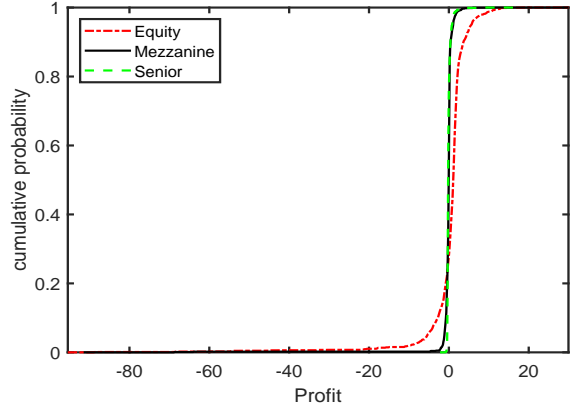}
\caption{ Empirical distribution of normalized total portfolio P\&L at $T$. }
\label{PL_CDF}
\end{figure}
Next, for each scenario realization we compute the reduction in volatility for all tranches and compute the descriptive statistics for reduction in volatility by taking the Radon-Nikodym densities of the stress scenarios into account. Table~\ref{RSD_varmin} illustrates the results: variance minimizing hedge is observed to yield the greatest reduction in variance for the mezzanine tranches. 

\begin{table}[htbp]
\setlength{\tabcolsep}{3pt}
\centering
\resizebox{9cm}{!} {
\begin{tabular}{|c||cccccc|}
\hline
%&  \multicolumn{6}{|c||}{\textbf{Regression based hedge}} & \multicolumn{6}{c|}{\textbf{Variance minimizing hedge}}  \\  %\cline{2-13}  \cline{2-13}
&Mean&Median&Std&CV&Max&Min\\
\hline
%0-3$\%$&66.3& 62.84& 24.62&0.37&258.1&4.12   \\
Equity&64.95 &  55.72&43.56&  0.67&601.6&3.36  \\
%6-9$\%$&56.33& 51.65 &28.16 &  0.50&1131.8& 12.95  \\
Mezzanine&43.28&  36.01 &34.33 &0.79 &2223.6&10.26 \\
%12-22$\%$&30.36 & 15.77&58.33&1.92&2361.4 &5.45\\
Senior&50.35 &  34.40 &62.19&1.23&3261.6&15.98 \\
\hline
\end{tabular}}
\caption{Descriptive Statistics of reduction in volatility for the variance-minimizing hedge}
\label{RSD_varmin}
\end{table}

We now present the results of the conditional simulation analysis. We fix the filtered factor series given in Figure~\ref{Y_Z} and conditional on these trajectories we simulate  2000 loss scenarios again with the importance sampling parameter values $\Psi=1$ and $\Psi=100$. Conditional distribution of the simulated loss process (under$\mathbb{P}$) at time $T$ is depicted in Figure~\ref{CCDF_L}.
One striking result is that, when compared with the loss distribution function given in Figure~\ref{CDF_L}, conditional loss distribution in Figure~\ref{CCDF_L} gives higher probability to losses greater than $10\%$. Moreover, simulation results suggest that for normal scenarios, $\Psi=1$, in 815 of 1000  simulated loss trajectories, there occurred a jump, that is, a default. These findings suggest that an actual default event in the iTraxx was very much likely to occur. 

\begin{figure}[H]
\centering
\includegraphics[height=5.5cm,width=7cm]{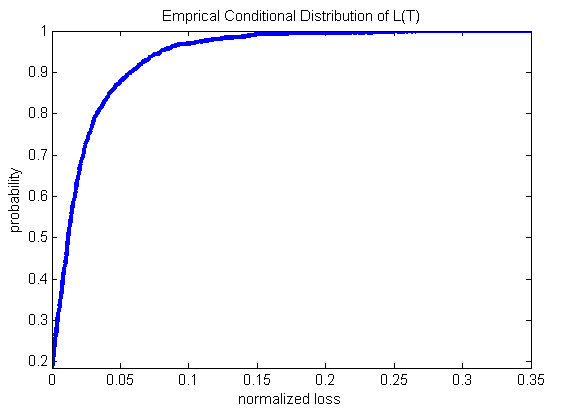}
\caption{Empirical conditional cumulative loss distribution function at $T=250$ for a total number of 2000 scenarios.}
\label{CCDF_L}
\end{figure}

\section{Conclusion}
In this study, we propose an affine two-factor model for the pricing and hedging of STCDOs. The most distinguishing feature of this model lies in the fact that a catastrophic risk component is considered as a tool for explaining the dynamics of the super-senior tranches. To test the real world performance we estimate the affine factor model on the iTraxx Europe data covering a period which witnessed different market conditions such as the recent credit crisis. As the main tool for the estimation of the affine factor model we use quasi-maximum likelihood based on a Kalman filter. This method requires the knowledge of conditional moments of the factor process. In this context, we utilize the polynomial preserving property for affine diffusion processes and compute the first two conditional moments of the factor process explicitly. Estimation results show that the two-factor model with the catastrophic component is successful in terms of fitting the market data even for super-senior tranches. Next, we compute the variance-minimizing hedging strategy based on the affine model. We investigate performance of the variance-minimizing hedging strategy on the data. We also ran a simulation analysis, in which the objective is to test the performance of the hedging strategy under more general loss scenarios.     

Our findings suggest that within the data period, According to the reduction in volatility criteria, the variance minimizing strategy is effective in reducing the risk of mezzanine tranches. The simulation study yields results of the same direction. 

\newpage
\begin{appendix}
\section{Appendix}\label{A1}
\begin{proof}[Proof of Proposition~\ref{condmoment}] Suppose we are given the process $X:=(Y,Z)$ where $Y$ and $Z$ solve \eqref{Y_P} and \eqref{Z_P}, respectively. This suggests that $X$ is an affine diffusion process with the state space $\mathcal{X}= \mathbb{R}_+^2$ (for a detailed information on affine diffusions see, e.g., \citet[Chapter 10]{filipo}). Now let $\tau \geq 0$, $k\in\mathbb{N}$, $(y,z)\in\mathcal{X}$ and denote the $k^{th}$ conditional  cross moment of $X_T$, $T=t+\tau$, by
\[f_k(\tau,Y_t,Z_t)=E[Y_{t+\tau}^pZ_{t+\tau}^q|Y_t=y,Z_t=z] \quad p,q\in\mathbb{N}, \ p+q=k. \]
An affine diffusion has the property that the $k^{th}$ conditional moment always exists and is a polynomial of at most degree $k$ of the current state $(y,z)$ (see Theorem 2.5 in \citet{filipovic2020polynomial} ). Being an affine diffusion, $X$ also possesses the Markov property (see, for instance \citet[Chapter III]{RY} for a detailed information on Markov processes). In particular, formally $f_k(\tau,y,z)$ solves the Kolmogorov backward equation
\begin{align}
\nonumber \frac{\partial}{\partial \tau}f_k(\tau,y,z)&=\mathcal{L}f_k(\tau,y,z),\\
\label{infgen1} f_k(0,y,z)&=y^pz^q
\end{align}
where $\mathcal{L}$ denotes the infinitesimal generator of the process  $X$ given by:
\[
\mathcal{L}=\kappa_y(z-y)\frac{\partial}{\partial y}+\kappa_z(\theta_z-z)\frac{\partial}{\partial z} +\frac{1}{2}\sigma_y^2y\frac{\partial^2}{ \partial y^2} +\frac{1}{2}\sigma_z^2z\frac{\partial^2}{\partial z^2}
.\]
To compute the conditional moments, taking the polynomial preserving property of process $X$ into account, one can use a polynomial ansatz in equation \eqref{infgen1}. Then, matching the coefficients yields a system of ordinary differential equations whose solution gives the coefficients of the polynomial in the ansatz. In the following, we follow this procedure as the first step towards the computation of the moments up to and including order two:

{\emph(i)} Let $E[Z_{t+\tau}|Y_t=y,Z_t=z]=:g(\tau,Y_t,Z_t)$. Function $g$ formally solves the Kolmogorov backward equation, that is,
\begin{equation}\label{g1}
\partial_{\tau}g=\kappa_y(z-y)\partial_y g+\kappa_z(\theta_z-z)\partial_z g+\frac{1}{2}\sigma_y^2y\partial_{yy}g+\frac{1}{2}\sigma_z^2z\partial_{zz}g
\end{equation}
Since  $X$ is  an affine process, we have the polynomial property of moments, that is, $g$ is in the following form
\begin{equation}\label{moment}
g(\tau,y,z)=g_0(\tau)+g_y(\tau)y+g_z(\tau)z
\end{equation}
for some functions $g_0$ ,$g_y$, $g_z$. Plugging \eqref{moment} into \eqref{g1} gives
\begin{align}
\nonumber \displaystyle\frac{d}{d\tau}g_{0}+\displaystyle\frac{d}{d\tau}g_{y}y+\displaystyle\frac{d}{d\tau}g_{z}z&=\kappa_y(z-y)g_{y}+\kappa_z(\theta_z-z)g_{z}
\end{align}
Comparing the coefficients on the right and left hand side, we get the following system of equations.
\begin{align}
\nonumber \displaystyle\frac{d}{d\tau}g_{0}&=\kappa_z\theta_zg_{z},\\
\nonumber g_{0}(0)&=0,\\
\nonumber \displaystyle\frac{d}{d\tau}g_{y}&=-\kappa_yg_{y},\\
\nonumber g_{y}(0)&=0,\\
\nonumber \displaystyle\frac{d}{d\tau}g_{z}&=\kappa_yg_{y}-\kappa_z g_{z},\\
\nonumber g_{z}(0)&=1.
\end{align}
Solving above system we get $g_{z}(\tau)=e^{-\kappa_z\tau}$, $g_{y}(\tau)\equiv0$ and $g_{0}(\tau)=\theta_z(1-e^{-\kappa_z\tau})$ implying that
\begin{equation}\label{Ez}
g(\tau,y,z)=\theta_z(1-e^{-\kappa_z\tau})+e^{-\kappa_z \tau}z
\end{equation}

{\emph(ii)} We set $E[Y_{t+\tau}|Y_t=y,Z_t=z]=:h(\tau,Y_t,Z_t)$. Formally, $h$ satisfies
\begin{equation}\label{g2}
\partial_{\tau}h=\kappa_y(z-y)\partial_y h+\kappa_z(\theta_z-z)\partial_z h+\frac{1}{2}\sigma_y^2y\partial_{yy}h+\frac{1}{2}\sigma_z^2z\partial_{zz} h
\end{equation}
From the polynomial property of moments again we have
\begin{equation}\label{moment2}
h(\tau,y,z)=h_{0}(\tau)+h_{y}(\tau)y+h_{z}(\tau)z.
\end{equation}
Plugging \eqref{moment2} into \eqref{g2} gives
\begin{align}
\nonumber\displaystyle\frac{d}{d\tau}h_{0}+\displaystyle\frac{d}{d\tau}h_{y}y+\displaystyle\frac{d}{d\tau}h_{z}z&=\kappa_y(z-y)h_{y}+\kappa_z(\theta_z-z)h_{z}.
\end{align}
Matching the coefficients we obtain
\begin{align}
\nonumber \displaystyle\frac{d}{d\tau}h_{0}&=\kappa_z\theta_zh_{z},\\
\nonumber h_{0}(0)&=0,\\
\nonumber\displaystyle\frac{d}{d\tau}h_{y}&=-\kappa_yh_{y},\\
\nonumber h_{y}(0)&=1,\\
\nonumber\displaystyle\frac{d}{d\tau}h_{z}&=\kappa_yh_{y}-\kappa_z h_{z},\\
\nonumber h_{z}(0)&=0.
\end{align}
Solving the system, we get
\begin{align}
\nonumber h_{0}(\tau)&=\frac{\theta_z}{\kappa_z-\kappa_y}(\kappa_z(1-e^{-\kappa_y t})-\kappa_y(1-e^{-\kappa_z \tau}))\\
\nonumber h_{y}(\tau)&=e^{-\kappa_y \tau}, \quad h_{z}(\tau)=e^{-\kappa_z \tau}\frac{\kappa_y}{\kappa_z-\kappa_y}(e^{\tau(\kappa_z-\kappa_y)}-1)
\end{align}
implying that \begin{align}\label{Ey}
\begin{split}h(\tau,y,z)&=\frac{\theta_z}{\kappa_z-\kappa_y}(\kappa_z(1-e^{-\kappa_y \tau})-\kappa_y(1-e^{-\kappa_z \tau}))+e^{-\kappa_y \tau}y\\
&+e^{-\kappa_z \tau}\frac{\kappa_y}{\kappa_z-\kappa_y}(e^{\tau(\kappa_z-\kappa_y)}-1)z
\end{split}\end{align}
{\textit(iii)} Let $E[Y_{t+\tau}Z_{t+\tau}|Y_t=y,Z_t=z]=:f(\tau,Y_t,Z_t)$. $f$ solves formally 
\begin{align}\label{f}
\partial_{\tau}f&=\kappa_y(z-y)\partial_y f+\kappa_z(\theta_z-z)\partial_z f+\displaystyle\frac{1}{2}\sigma_y^2y\partial_{yy}f+\displaystyle\frac{1}{2}\sigma_z^2z\partial_{zz} f.
\end{align}
Following exactly the same lines as above we have
\begin{equation}\label{momentf2}
f(\tau,y,z)=f_{0}(\tau)+f_{y}(\tau)y+f_{z}(\tau)z+f_{z^2}(\tau)z^2+f_{zy}(\tau)zy+f_{y^2}(\tau)y^2.
\end{equation}
Plugging \eqref{momentf2} into \eqref{f} gives
\begin{align}
\nonumber \begin{split}\displaystyle\frac{d}{d\tau}{f}_{0}+\displaystyle\frac{d}{d\tau}{f}_{y}y+\displaystyle\frac{d}{d\tau}{f}_{z}z+\displaystyle\frac{d}{d\tau}{f}_{z^2}z^2+\displaystyle\frac{d}{d\tau}{f}_{zy}zy+\displaystyle\frac{d}{d\tau}{f}_{y^2}y^2&=  \kappa_y(z-y)(f_y+2f_{y^2}y+f_{zy}z)\\
&+\kappa_z(\theta_z-z)(f_z+f_{zy}y+2f_{z^2}z)\\
&+\sigma_y^2yf_{y^2}+\sigma_z^2zf_{z^2}\end{split}.
\end{align}
Thus we have
\begin{align}
\nonumber\displaystyle\frac{d}{d\tau}{f}_{0}&=\kappa_z\theta_zf_{z},\\
\nonumber\displaystyle\frac{d}{d\tau}{f}_{y}&=-\kappa_yf_{y}+\kappa_z\theta_zf_{zy}+\sigma_y^2f_{y^2},\\
\nonumber\displaystyle\frac{d}{d\tau}{f}_{z}&=\kappa_yf_{y}-\kappa_z f_{z}+(2\kappa_z\theta_z+\sigma_z^2)f_{z^2},\\
\nonumber\displaystyle\frac{d}{d\tau}{f}_{z^2}&=\kappa_yf_{zy}-2\kappa_zf_{z^2},\\
\nonumber\displaystyle\frac{d}{d\tau}{f}_{y^2}&=-2\kappa_yf_{y^2},\\
\nonumber\displaystyle\frac{d}{d\tau}{f}_{zy}&=2\kappa_yf_{y^2}-(\kappa_y+\kappa_z)f_{zy},
\end{align}
with \[f_{0}(0)=f_{z}(0)=f_{y}(0)=f_{z^2}(0)=f_{y^2}(0)=0,\quad  f_{zy}(0)=1.\]
Solving the above system yields
\begin{align}
\nonumber \begin{split} f_0(\tau)&=\frac{e^{-(2\kappa_z+\kappa_y)\tau}\theta_z}{2(\kappa_z\kappa_y^2-\kappa_z^3)}\Big(2e^{2\kappa_z\tau}(\kappa_z+\kappa_y)\kappa_z^2\theta_z+e^{\kappa_yt}(\kappa_z+\kappa_y)\kappa_y(\sigma_z^2+2\kappa_z\theta_z)\\
&-2e^{\kappa_z\tau}\kappa_z^2(\sigma_z^2+(\kappa_z+\kappa_y)\theta_z)-e^{(2\kappa_z+\kappa_y)\tau}(\kappa_z-\kappa_y)(2\kappa_z^2\theta_z+\kappa_y(\sigma_z^2+2\kappa_z\theta_z))\\
&+2e^{(\kappa_z+\kappa_y)\tau}(\kappa_z+\kappa_y)(-\kappa_y\sigma_z^2+\kappa_z^2\theta_z+\kappa_z(\sigma_z^2-2\kappa_y\theta_z)) \Big),\end{split}\\
\nonumber f_y(\tau)&=\theta_z(e^{-\kappa_y \tau}-e^{-(\kappa_z+\kappa_y)\tau}),\\
\nonumber \begin{split}f_z(\tau)&=\frac{e^{-(2\kappa_z+\kappa_y)\tau}}{\kappa_z(\kappa_z-\kappa_y)}\Big(e^{2\kappa_z\tau}\kappa_z\kappa_y\theta_z+e^{\kappa_y\tau}\kappa_y(\sigma_z^2+2\kappa_z\theta_z)-e^{\kappa_z\tau}\kappa_z(\sigma_z^2+(\kappa_z+\kappa_y)\theta_z)\\
&+e^{(\kappa_z+\kappa_y)\tau}(-\kappa_y\sigma_z^2+\kappa_z^2\theta_z+\kappa_z(\sigma_z^2-2\kappa_y\theta_z))\Big),\end{split}\\
\nonumber f_{y^2}&\equiv0, \ f_{zy}=e^{-(\kappa_z+\kappa_y)\tau}, \ f_{z^2}(\tau)=\frac{\kappa_y}{\kappa_z-\kappa_y}(e^{-(\kappa_z+\kappa_y)\tau}-e^{-2\kappa_z \tau}).
\end{align}
Inserting these expressions into \eqref{momentf2} we get $f(\tau,y,z)$.
\par{\emph(iv)} Set $E[Z_{t+\tau}^2|Y_t=y,Z_t=z]=:g(t,Y_t,Z_t)$. Then, $g$ solves
\begin{align}\label{q}
\partial_{\tau}g=\kappa_y(z-y)\partial_y g+\kappa_z(\theta_z-z)\partial_z g+\displaystyle\frac{1}{2}\sigma_y^2y\partial_{yy} g+\displaystyle \frac{1}{2}\sigma_z^2z\partial_{zz} g.
\end{align}
Also, $g$ is in the following form:
\begin{align}\label{momentq2}
g(\tau,y,z)=q_{0}(t)+q_{y}(t)y+q_{z}(t)z+q_{z^2}(t)z^2+q_{zy}(t)zy+q_{y^2}(t)y^2.
\end{align}
Inserting \eqref{momentq2} into \eqref{q} gives
\begin{align}\begin{split}
\nonumber\displaystyle\frac{d}{d\tau}{q}_{0}+\displaystyle\frac{d}{d\tau}{q}_{y}y+\displaystyle\frac{d}{d\tau}{q}_{z}z+\displaystyle\frac{d}{d\tau}{q}_{z^2}z^2+\displaystyle\frac{d}{d\tau}{q}_{zy}zy+\displaystyle\frac{d}{d\tau}{q}_{y^2}y^2&=\kappa_y(z-y)(q_y+2q_{y^2}y+q_{zy}z)\\
&+\kappa_z(\theta_z-z)(q_z+q_{zy}y+2q_{z^2}z)\\
&+\sigma_y^2yq_{y^2}+\sigma_z^2zq_{z^2}
\end{split}.\end{align}
which yields the system
\begin{eqnarray}
\nonumber\displaystyle\frac{d}{d\tau}{q}_{0}&=&\kappa_z\theta_zq_{z},\\
\nonumber\displaystyle\frac{d}{d\tau}{q}_{y}&=&-\kappa_yq_{y}+\kappa_z\theta_zq_{zy}+\sigma_y^2q_{y^2},\\
\nonumber\displaystyle\frac{d}{d\tau}{q}_{z}&=&\kappa_yq_{y}-\kappa_z q_{z}+(2\kappa_z\theta_z+\sigma_z^2)q_{z^2},\\
\nonumber\displaystyle\frac{d}{d\tau}{q}_{z^2}&=&\kappa_yq_{zy}-2\kappa_zq_{z^2},\\
\nonumber\displaystyle\frac{d}{d\tau}{q}_{y^2}&=&-2\kappa_yq_{y^2},\\
\nonumber\displaystyle\frac{d}{d\tau}{q}_{zy}&=&2\kappa_yq_{y^2}-(\kappa_y+\kappa_z)q_{zy},
\end{eqnarray}
with \begin{align}\nonumber q_{0}(0)=q_{z}(0)=q_{y}(0)=q_{y^2}(0)=q_{zy}(0)=0,\quad  q_{z^2}(0)=.1\end{align}
We solve this system of equations and get
\begin{align}
\nonumber q_0(\tau)&=\displaystyle \frac{e^{-2\kappa_z\tau}(e^{\kappa_z\tau}-1)^2\theta_z(\sigma_z^2+2\kappa_z\theta_z)}{2\kappa_z},\\
\nonumber \\
\nonumber q_z(\tau)&=\displaystyle\frac{e^{-2\kappa_z\tau}(e^{\kappa_z\tau}-1)(\sigma_z^2+2\kappa_z\theta_z)}{\kappa_z},\\
\nonumber \\
\nonumber q_{z^2}(\tau)&=e^{-2\kappa_z \tau},\\
\nonumber \\
\nonumber q_{y}(\tau)&\equiv q_{y^2}(\tau)\equiv q_{zy}(\tau) \equiv0.
\end{align}
Inserting above expressions into \eqref{momentq2} yields the expression for $g$.

\par{\emph(v)} Let $E[Y_{t+\tau}^2|Y_t=y,Z_t=z]=:h(\tau,Y_t,Z_t)$. Formally, $h$ satisfies the Kolmogorov's backward equation
\begin{align}\label{p}
\partial_{\tau} h&=\kappa_y(z-y)\partial_y h+\kappa_z(\theta_z-z)\partial_z h+\displaystyle\frac{1}{2}\sigma_y^2y\partial_{yy} h+\displaystyle\frac{1}{2}\sigma_z^2z\partial_{zz} h.
\end{align}
From the polynomial property $h$ is of the form
\begin{align}\label{momentp2}
h(\tau,y,z)=p_{0}(\tau)+p_{y}(\tau)y+p_{z}(\tau)z+p_{z^2}(\tau)z^2+p_{zy}(\tau)zy+p_{y^2}(\tau)y^2.
\end{align}
Plugging \eqref{momentp2} into \eqref{p} gives
\begin{align}
\nonumber \begin{split}\displaystyle\frac{d}{d\tau}{p}_{0}+\displaystyle\frac{d}{d\tau}{p}_{y}y+\displaystyle\frac{d}{d\tau}{p}_{z}z+\displaystyle\frac{d}{d\tau}{p}_{z^2}z^2+\displaystyle\frac{d}{d\tau}{p}_{zy}zy+\displaystyle\frac{d}{d\tau}{p}_{y^2}y^2&=\kappa_y(z-y)(p_y+2p_{y^2}y+p_{zy}z)\\
&+\kappa_z(\theta_z-z)(p_z+p_{zy}y+2p_{z^2}z)\\
&+\sigma_y^2yp_{y^2}+\sigma_z^2zp_{z^2}\end{split}.
\end{align}
which yields the following system of differential equations
\begin{align}
\nonumber\displaystyle\frac{d}{d\tau}{p}_{0}&=\kappa_z\theta_zp_{z},\\
\nonumber\displaystyle\frac{d}{d\tau}{p}_{y}&=-\kappa_yp_{y}+\kappa_z\theta_zp_{zy}+\sigma_y^2p_{y^2},\\
\nonumber\displaystyle\frac{d}{d\tau}{p}_{z}&=\kappa_yp_{y}-\kappa_z p_{z}+(2\kappa_z\theta_z+\sigma_z^2)p_{z^2},\\
\nonumber\displaystyle\frac{d}{d\tau}{p}_{z^2}&=\kappa_yp_{zy}-2\kappa_zp_{z^2},\\
\nonumber\displaystyle\frac{d}{d\tau}{p}_{y^2}&=-2\kappa_yp_{y^2},\\
\nonumber\displaystyle\frac{d}{d\tau}{p}_{zy}&=2\kappa_yp_{y^2}-(\kappa_y+\kappa_z)p_{zy},
\end{align}
with
\begin{align}
\nonumber p_{0}(0)=p_{z}(0)=p_{y}(0)=p_{z^2}(0)=p_{zy}(0)=0,\quad  p_{y^2}(0)=1.
\end{align}

Solving the system of linear ODEs yields
\begin{align}\nonumber
\begin{split}
p_0(\tau)&=\displaystyle\frac{e^{-(3\kappa_z+\kappa_y)\tau}\theta_z}{2\kappa_z(\kappa_z-2\kappa_y)(\kappa_z-\kappa_y)^2\kappa_y(\kappa_z+\kappa_y)}
\Big(e^{(\kappa_z+\kappa_y)\tau}(\kappa_z-2\kappa_y)\kappa_y^3(\kappa_z+\kappa_y)\\
&\times(\sigma_z^2+2\kappa_z\theta_z)-2e^{3\kappa_z\tau}\kappa_z^2(\kappa_z-2\kappa_y)(\kappa_z^2-\kappa_y^2)(\sigma_y^2+2\kappa_y\theta_z)
-4e^{2\kappa_z\tau}\kappa_z^2\\
&\times(\kappa_z-2\kappa_y)\kappa_y^2(\sigma_z^2+(\kappa_z+\kappa_y)\theta_z)+e^{(3\kappa_z+\kappa_y)\tau}(\kappa_z-2\kappa_y)(\kappa_z-\kappa_y)^2\\
&\times(\kappa_z^2\sigma_y^2+\kappa_z\kappa_y\sigma_y^2+\kappa_y^2\sigma_z^2+2\kappa_z\kappa_y(\kappa_z+\kappa_y)\theta_z)+e^{(3\kappa_z-\kappa_y)\tau}\kappa_z^2(\kappa_z+\kappa_y)\\
&\times(\kappa_y^2(\sigma_y^2-\sigma_z^2)+\kappa_z^2(\sigma_y^2+2\kappa_y\theta_z)-2\kappa_z\kappa_y(\sigma_y^2+2\kappa_y\theta_z))+2e^{(2\kappa_z+\kappa_y)\tau}\kappa_y^2\\
&\times(\kappa_y^2-\kappa_z^2)(2\kappa_y\sigma_z^2-2\kappa_z^2\theta_z+\kappa_z(\sigma_y^2-2\sigma_z^2+4\kappa_y\theta_z))\Big) ,\end{split}\end{align}
\begin{align}
\nonumber p_{y}(\tau)&=\displaystyle\frac{e^{-2\kappa_y\tau}((1-e^{\kappa_y\tau})\kappa_z(\sigma_y^2+2\kappa_y\theta_z)+\kappa_y((e^{\kappa_y\tau}-1)\sigma_y^2+2(e^{\kappa_y\tau}-e^{(\kappa_y-\kappa_z)\tau})\kappa_y\theta_z))}{\kappa_y(\kappa_y-\kappa_z)},\end{align}
\begin{align}
\nonumber \begin{split}
p_{z}(\tau)&=\displaystyle\frac{e^{-(3\kappa_z+\kappa_y)\tau}}{\kappa_z(\kappa_z-2\kappa_y)(\kappa_z-\kappa_y)^2}\Big(-e^{(\kappa_z+\kappa_y)\tau}(\kappa_z-2\kappa_y)\kappa_y^2(\sigma_z^2+2\kappa_z\theta_z)\\
&+e^{3\kappa_z\tau}\kappa_z(\kappa_z-2\kappa_y)(\kappa_z-\kappa_y)(\sigma_y^2+2\kappa_y\theta_z)+2e^{2\kappa_z\tau}\kappa_(\kappa_z-2\kappa_y)\kappa_y\\
&\times(\sigma_z^2+(\kappa_z+\kappa_y)\theta_z)-e^{(3\kappa_z-\kappa_y)\tau}\kappa_z(\kappa_y^2(\sigma_y^2-\sigma_z^2)+\kappa_z^2(\sigma_y^2+2\kappa_y\theta_z)\\
&-2\kappa_z\kappa_y(\sigma_y^2+2\kappa_y\theta_z))-e^{(2\kappa_z+\kappa_y)\tau}\kappa_y(\kappa_y-\kappa_z)(2\kappa_y\sigma_z^2-2\kappa_z^2\theta_z\\
&+\kappa_z(\sigma_y^2-2\sigma_z^2+4\kappa_y\theta_z))\Big),
\end{split}
\end{align}
\begin{align}
\nonumber p_{zy}(\tau)&=\displaystyle\frac{2\kappa_y e^{-(\kappa_z+\kappa_y)\tau}(e^{(\kappa_z-\kappa_y)\tau}-1)}{\kappa_z-\kappa_y},\\
\nonumber p_{y^2}(\tau)&=e^{-2\kappa_y \tau},\\
\nonumber p_{z^2}(\tau)&=\displaystyle \frac{\kappa_y^2e^{-2\kappa_z\tau}(e^{(\kappa_z-\kappa_y)\tau}-1)^2}{(\kappa_z-\kappa_y)^2}.
\end{align}
Finally, inserting these coefficients into \eqref{momentp2} gives $h(\tau,y,z)$.

Now we need to prove that the polynomial expressions in $(i)-(v)$ actually solve \eqref{infgen1}, that is, they provide the conditional moments of $X_{t+\tau}$. The next lemma gives a criteria for this to hold. Clearly, functions $f$, $g$ and $h$ appearing in $(i)-(v)$ above are $C^{1,2}$ functions whose spatial derivatives satisfy the polynomial growth condition given in
\eqref{biktim}, meaning that the result of Lemma \ref{FLL} applies and this finishes the proof of the Proposition~\ref{condmoment}.
\end{proof}
\begin{lem}\label{FLL} Suppose $u_0$ is a $C^2$-function on $\mathcal{X}$, and  $u$ is a $C^{1,2}$ -function on $R^+\times \mathcal{X}$ whose spatial derivatives satisfy the polynomial growth condition
\begin{align} \label{biktim}
\left\|\left(\frac{\partial u}{\partial y} \ , \ \frac{\partial u}{\partial z} \right)  \right\|\leq K(1+\|(y,z) \|^{\nu}), \ t\leq T, \quad (y,z) \in \mathcal{X}
\end{align}
for some constant $K=K(T)\leq \infty$ and some ${\nu}\geq 1$, for all $T<\infty$.

If $u(t,y,z)$ satisfies the Kolmogorov backward equation
\begin{align}\label{btsnn}
\nonumber \frac{\partial u}{\partial t}&=\kappa_y(z-y)\frac{\partial u}{\partial y}+\kappa_z(\theta_z-z)\frac{\partial u}{\partial z} +\frac{1}{2}\sigma_y^2y\frac{\partial^2 u}{ \partial y^2} +\frac{1}{2}\sigma_z^2z\frac{\partial^2 u}{\partial z^2},\\
u(0,y,z)&=u_0(y,z)
\end{align}
for all $t\geq 0$ and $(y,z)\in \mathcal{X}$, then for all $t\leq T<\infty$
\[
u(T-t,Y_t,Z_t)=E[u_0(Y_T,Z_T)|Y_t,Z_t]\]
\end{lem}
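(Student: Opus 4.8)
The plan is to apply It\^o's formula to the process $s\mapsto u(T-s,Y_s,Z_s)$ on the interval $[t,T]$, use the Kolmogorov backward equation to annihilate the finite-variation part, and then upgrade the resulting local martingale to a genuine martingale by combining the polynomial growth bound \eqref{biktim} with the moment estimates for the affine diffusion $X=(Y,Z)$.

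First I would fix $t\le T$ and set $v(s)=u(T-s,Y_s,Z_s)$ for $s\in[t,T]$. Since $u\in C^{1,2}$ and $(Y,Z)$ is a continuous semimartingale solving \eqref{Y_P}--\eqref{Z_P}, It\^o's formula applies. Because the two driving Brownian motions are independent---equivalently, $\mathcal L$ contains no mixed second-order term---the cross-variation $\langle Y,Z\rangle$ vanishes, and the second-order contributions assemble exactly into $\mathcal L u$. With $\tau=T-s$ one obtains
\[
dv(s)=\bigl(-\partial_\tau u+\mathcal L u\bigr)(T-s,Y_s,Z_s)\,ds+\sigma_y\sqrt{Y_s}\,\partial_y u\,dW^y_s+\sigma_z\sqrt{Z_s}\,\partial_z u\,dW^z_s.
\]
By hypothesis $u$ satisfies $\partial_\tau u=\mathcal L u$, so the drift term is identically zero and $v$ reduces to a pure stochastic integral, hence a local martingale on $[t,T]$.

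The heart of the argument---and the step I expect to be the main obstacle---is promoting $v$ from a local martingale to a true martingale. A sufficient condition is the square-integrability estimate $E\bigl[\int_t^T\bigl(\sigma_y^2\,Y_s(\partial_y u)^2+\sigma_z^2\,Z_s(\partial_z u)^2\bigr)ds\bigr]<\infty$. Invoking \eqref{biktim}, the integrand is dominated by $C\bigl(1+\|(Y_s,Z_s)\|^{2\nu}\bigr)(Y_s+Z_s)$ for a constant $C=C(T)$. Here I would use that $X$ is an affine, hence polynomial, diffusion on $\mathbb R_+^2$: all polynomial moments exist and $\sup_{s\le T}E\bigl[\|(Y_s,Z_s)\|^m\bigr]<\infty$ for every $m$, which is the standard moment bound for affine/CIR-type diffusions (see \citet[Chapter 10]{filipo} and \citet{filipovic2020polynomial}). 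This gives the required integrability and makes the stochastic integral a square-integrable martingale, so $v$ is a martingale.

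With $v$ a martingale, the conclusion follows at once by taking conditional expectations over $[t,T]$: $E[v(T)\mid\mathcal F_t]=v(t)$, that is, $E[u_0(Y_T,Z_T)\mid\mathcal F_t]=u(T-t,Y_t,Z_t)$, since $u(0,\cdot)=u_0$. Finally, the Markov property of the affine diffusion $X$ replaces conditioning on $\mathcal F_t$ by conditioning on $(Y_t,Z_t)$, yielding the stated identity. I would note that exactly the same verification applied in parts $(i)$--$(v)$ of Proposition~\ref{condmoment}, where the candidate polynomials are $C^{1,2}$ with spatial derivatives of polynomial growth, confirms that those explicit expressions indeed give the conditional moments of $X_{t+\tau}$.
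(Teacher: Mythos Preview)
Your proposal is correct and follows essentially the same route as the paper: It\^o's formula plus the backward equation kills the drift, and the polynomial growth bound combined with moment estimates for the affine diffusion upgrades the local martingale to a true one. The only minor technical difference is that the paper bounds the quadratic variation via $E\bigl[\sup_{s\le T}\|(Y_s,Z_s)\|^{2\nu}\bigr]<\infty$ (supremum inside the expectation, invoking \citet[Problem~5.3.15]{kar}), whereas you use the weaker and more elementary estimate $\sup_{s\le T}E\bigl[\|(Y_s,Z_s)\|^m\bigr]<\infty$ together with Fubini; both suffice here.
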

\begin{proof}
Since $u$ is assumed to be $C^{1,2}$, in view of the It\^{o} formula we get
\begin{align}\label{btsnn2}
\begin{split}d u(T-t,Y_t,Z_t)&=\Bigg( -\frac{\partial u(T-t,Y_t,Z_t)}{\partial t}+\frac{\partial u(T-t,Y_t,Z_t)}{\partial y}\kappa_y(Z_t-Y_t)\\
&+\frac{\partial u(T-t,Y_t,Z_t)}{\partial z} \kappa_z(\theta_z-Z_t)+\frac{1}{2}\sigma_y^2Y_t\frac{\partial^2 u(T-t,Y_t,Z_t)}{ \partial y^2}\\
& +\frac{1}{2}\sigma_z^2Z_t\frac{\partial^2 u(T-t,Y_t,Z_t)}{\partial z^2} \Bigg) dt\\
&+ \frac{\partial u(T-t,Y_t,Z_t)}{\partial y}\sigma_y\sqrt{Y_t}dW^y_t+\frac{\partial u(T-t,Y_t,Z_t)}{\partial z}\sigma_z\sqrt{Z_t}dW^z_t
\end{split}\end{align}
Now suppose $u$ satisfies \eqref{btsnn}. Then, the drift term in \eqref{btsnn2} immediately vanishes, implying that $u(T-t,Y_t,Z_t)$ is a local martingale with $u(0,Y_T,Z_T)=u_0(Y_T,Z_T)$. We now write
\[d u(T-t,Y_t,Z_t)= \frac{\partial u(T-t,Y_t,Z_t)}{\partial y}\sigma_y\sqrt{Y_t}dW^y_t+\frac{\partial u(T-t,Y_t,Z_t)}{\partial z}\sigma_z\sqrt{Z_t}dW^z_t
\]
In what follows our main objective is to show that under the assumptions of the lemma, $u(T-t,Y_t,Z_t)$ is a true martingale.  

We have
\begin{align}
\nonumber
&E\left[\int_0^T \left\|   \left(\frac{\partial u(T-s,Y_s,Z_s)}{\partial y}, \ \frac{\partial u(T-s,Y_s,Z_s)}{\partial z}\right)\left[\begin{array}{cc} \sigma_y \sqrt{Y_t}&0\\0&\sigma_z\sqrt{Z_t}\end{array}\right]\right\|^2 ds\right] \\
\nonumber \leq & E\left[\int_0^T \left\|   \left(\frac{\partial u(T-s,Y_s,Z_s)}{\partial y}, \ \frac{\partial u(T-s,Y_s,Z_s)}{\partial z}\right)\right\|^2  \left\| \left[\begin{array}{cc} \sigma_y^2 Y_t&0\\0&\sigma_z^2 Z_t\end{array}\right]   \right\| ds  \right] \\
\label {bittik}\leq & K\left(1+E\left[ \sup_{s\leq T} \left\| (Y_s,Z_s)\right\|^{2\nu}   \right]\right)
\end{align}
where the last inequality follows from assumption \eqref{biktim} and due to the fact that the diffusion parameter of the process $X$ satisfies the linear growth condition. Finally, one can show that (see, for example, \citet[Problem 5.3.15]{kar}) the expectation in \eqref{bittik} is finite and this yields the desired result.
\end{proof}

\end{appendix}

\bibliography{referencessum}
\bibliographystyle{apa}

\end{document}